\def\BState{\State\hskip-\ALG@thistlm}
\newcommand{\comment}[1]{}
\newcommand{\mult}[1]{\mathtt{#1}}
\theoremstyle{plain}
\newtheorem{remark}{Remark}
\newtheorem{thm}{Theorem}[section]
\crefname{thm}{theorem}{theorems}
\Crefname{thm}{Theorem}{Theorems} 
\newtheorem{cor}[thm]{Corollary}
\crefname{cor}{corollary}{corollaries}
\Crefname{lem}{Corollary}{Corollaries}
\crefname{prop}{proposition}{propositions}
\Crefname{prop}{Proposition}{Propositions}
\newtheorem{lem}[thm]{Lemma}
\crefname{lem}{lemma}{lemmas}
\Crefname{lem}{Lemma}{Lemmas}
\theoremstyle{definition} 
\newtheorem{defn}{Definition}[section]
\crefname{defn}{definition}{definitions}
\Crefname{defn}{Definition}{Definitions}
\theoremstyle{remark}
\crefname{remark}{remark}{remarks}
\Crefname{remark}{Remark}{Remarks}
\tikzstyle{mybox} = [draw=black, very thick, rectangle, rounded corners, inner ysep=5pt, inner xsep=5pt]
\newcounter{protocol}
\newenvironment{protocol}[1]
{\par\addvspace{\topsep}
	\noindent
	\tabularx{\linewidth}{@{} X @{}}
	\hline
	\refstepcounter{protocol}\textbf{Protocol \theprotocol} #1 \\
	\hline}
{ \\
	\hline
	\endtabularx
	\par\addvspace{\topsep}}
\Crefname{protocol}{Protocol}{Protocols}
\title{Multidimensional Byzantine Agreement in a Synchronous Setting}
\author{Andrea Flamini \and Riccardo Longo \and Alessio Meneghetti}
\date{Department of Mathematics, University of Trento\\ via Sommarive, 14 - 38123 Povo (Trento), Italy \thanks{andrea.flamini.1995@gmail.com, riccardolongmath@gmail.com, alessio.meneghetti@unitn.it}}
\begin{document}

\maketitle

\abstract{In this paper we present the \emph{Multidimensional Byzantine Agreement (MBA) Protocol}, a \emph{leaderless} Byzantine agreement protocol defined for complete and synchronous networks that allows a network of nodes to reach consensus on a vector of relevant information regarding a set of observed events.

The consensus process is carried out in parallel on each component, and the output is a vector whose components are either values with wide agreement in the network (even if no individual node agrees on every value) or a special value $\bot$ that signals irreconcilable disagreement.
The MBA Protocol is probabilistic and its execution halts with probability 1, and the number of steps necessary to halt follows a Bernoulli-like distribution.

The design combines a \emph{Multidimensional Graded Consensus} and a \emph{Multidimensional Binary Byzantine Agreement}, the generalization to the multidimensional case of two protocols presented by Micali et al. in \cite{feldman1997optimal,micali2016byzantine}.

We prove the correctness and security of the protocol assuming a synchronous network where less than a third of the nodes are malicious.
}

\section{Introduction}
The notion of \emph{Byzantine agreement} was introduced for the
binary case (i.e. when the initial value consists of a bit) by Lamport, Shostak, and Pease \cite{lamport2019byzantine}, then quickly extended
to arbitrary initial values (see the survey of Fischer \cite{fischer1983consensus}).
A (binary) \emph{Byzantine agreement protocol} or \emph{Byzantine Fault Tolerant (BFT) protocol}, is a protocol that allows a set of mutually mistrusting players to reach agreement on an arbitrary (respectively binary) value.
These protocols have been initially developed to deal with \emph{Byzantine faults} in distributed computing systems.
A Byzantine fault is a particularly tricky failure where a component, such as a server, can inconsistently appear both failed and functioning, presenting different symptoms to different observers.
The problem then evolved to model malicious behaviour in distributed and multi-party protocols, with natural applications in distributed ledgers such as blockchains, alongside Proof of Work and Proof of Stake solutions \cite{longo2020analysis,meneghetti2020survey,PoS}.

Many BFT protocols \cite{buterin2017casper,castro1999practical,team2017zilliqa,yin2018hotstuff} use the \emph{primary-backup} model, pioneered in the Practical Byzantine Fault Tolerance (PBFT) protocol proposed by Castro and Liskov \cite{castro1999practical}.
These BFT protocols are designed in a way that a single replica is designated as the primary and is responsible for coordinating the consensus decisions, while all the other replicas perform the backup role. However, this primary can be smartly malicious and degrade the performance of the system. For this reason there have been various efforts to design leaderless BFT protocols \cite{micali2016byzantine,rocket2018snowflake}.

\paragraph{Outline}
In \Cref{Net_Ass} we formally define a Byzantine agreement protocol and introduce our assumptions on the network.
Then, in \Cref{probDesc} we give a motivation for our generalization work by giving a model that describes interesting real-world applications and giving an example where existing solutions fall short to expectations.

In \Cref{prot-comp} we proceed to define our protocol components, starting with some useful notation.
In \Cref{MBBA subsection} we extend to the multidimensional case the Binary Byzantine Agreement protocol \cite{micali2016byzantine} defining the \emph{Multidimensional Binary Byzantine Agreement (MBBA) Protocol} and we prove it satisfies the properties of a Byzantine agreement protocol.
In \Cref{MGC subsection} we  extend the notion of $(n,t)$-Graded Consensus protocol introduced by Micali in \cite{chen2016algorand}, then we define the \emph{Multidimensional Graded Consensus (MGC) Protocol}, the natural extension of the Graded Consensus Protocol adopted in Algorand \cite{chen2016algorand}, whose definition comes from the Gradecast Protocol presented by Micali in \cite{feldman1997optimal}.

Then, in \Cref{MBA section} we combine the MGC Protocol and the MBBA Protocol into the \emph{Multidimensional Byzantine Agreement (MBA) Protocol}, the extension to the multidimensional case of the Byzantine agreement Protocol in a synchronous setting described in Algorand \cite{chen2016algorand},
proving that it satisfies the properties of Byzantine agreement protocol.
In \Cref{PG} we introduce a probability game that models how the consensus is reached during the protocol execution, alongside a probability distribution.
Then, in \Cref{mainT} we prove that this distribution predicts the number of steps necessary to end the protocol execution.

Finally, in \Cref{conclusion} we draw some conclusions and remarks, and outline future works to improve the applicability of the MBA protocol in practical settings.

\subsection{BA Definition and Network Assumptions}
\label{Net_Ass}
We now provide the formal definition of Byzantine Agreement protocol:
\begin{defn}[\textit{$(n,t)$-Byzantine Agreement protocol}]
%In a synchronous network, let \texttt{P} be a $n$-player protocol, whose player set is common
%knowledge among the players, $t$ a positive integer such that $n \ge 2t + 1$. 
We say that $\mathcal{P}$ is an arbitrary-value (respectively, binary) \emph{$(n, t)$-Byzantine Agreement} (BA) protocol with soundness ${0<\sigma<1}$ if, for every set of values $V$ not containing the special symbol $\bot$ (respectively, for $V = \{0, 1\}$), in an
execution in which at most $t$ out of the $n$ players are malicious and every player $i$ starts with an
initial value $v_i \in V$, every honest player $j$ halts with probability 1, outputting a value $o_j \in V\cup\{\bot\}$
so as to satisfy, with probability at least $\sigma$, the following two conditions:
\begin{enumerate}
    \item Agreement: there exists $o \in V\cup\{\bot\}$ such that $o_j = o$ for each honest player $j$.
    \item Consistency: if, for some value $v \in V , v_j = v$ for all honest players, then $o = v$.
\end{enumerate}

We refer to $o$ as $\mathcal{P}$’s output, and to each $o_i$ as player $i$’s output.
Agreement is reached on $\bot$ when it is not possible to agree on any other meaningful value in $V$.
\end{defn}

We remark that a network of nodes cannot always reach agreement on a meaningful piece of information.
In fact, if at the beginning of the protocol execution many players are in disagreement with each other, then none of the information advertised by the nodes can prevail.
For this reason the protocol must be designed in such a way that a high disagreement rate is detected, and the players output the symbol $\bot$ at the end of the protocol execution.

We assume that the players of the protocol form a network $\mathcal{N}$, that is essentially a graph that models the communication channels.
The network is made of $n$ nodes (the players), out of which less than $\frac{n}{3}$ are malicious or faulty.
The network graph is complete, which means that between any two distinct nodes there is a direct and private communication channel.

We will also require that the communications are carried out in a synchronous way, i.e. each node can access a common clock which triggers the start of every protocol step.
Finally, we assume that all communications are performed instantaneously.

Given this communication model, and the fact that honest nodes are supposed to send the same message to every node, throughout the paper we will use the terms ``send'' and ``broadcast'' interchangeably.

\subsection{Problem Description}
\label{probDesc}
In order to better motivate the generalization process that led to the design of protocol presented in this paper, let us introduce a model that we will show to encompass various practical problems, and an example situation for which solutions in literature (to the best of our knowledge) do not give satisfactory results.

Let $\mathcal{N}$ be a network, where each node $i\in \mathcal{N}$ has access to a Random Variable $\overline{X}^{(i)}$, where $\overline{X}^{(i)}=(X_{1}^{(i)},\dots,X_{m}^{(i)})$, for all  $c\in\{1,\dots,m\}$, where $X_c^{(i)}$ takes values in a discrete set $V_c$, $\mathbb{P}_{c}^{(i)}$ is its probability mass function, and $X_c^{(i)} = X_{c}^{(l)}$, for all $i,l\in \mathcal{N}$ and for all $c$.
Each node $i$ records ${O^{(i)}=\left(x_{1}^{(i)},\dots,x_{m}^{(i)}\right)}$, the \textit{observed values} given by the random variable.
%This is the input of $i$ for the protocol run.
% For simplicity, we will talk about events $(E_{1}^{(i)},\dots,E_{m}^{(i)})$ to observe before the protocol run instead of $(X_{1}^{(i)},\dots,X_{m}^{(i)})$, and we will talk about vectors of relevant information recorded by $i$ instead of observed values $O^{(i)}=(x_{1}^{(i)},\dots,x_{m}^{(i)})$.
The goal of our protocol is to allow the nodes in $\mathcal{N}$ to reach agreement on a vector of observed values.% about the events observed.

We make a distinction between two kind of components: \emph{ambiguous} and \emph{unambiguous}.
A component $c$ %$X_{c}^{(i)}=X_{c}^{(l)}$
is ambiguous if there exist two honest nodes $i,l\in \mathcal{N}$ who observe two distinct values $x_c^{(i)} \ne x_c^{(l)}$, otherwise we say that the component is unambiguous.

Essentially, ambiguous events cause disagreement on some vector components, even among honest nodes.
This means that nodes cannot agree on a vector as a whole, so we propose a \emph{leaderless} protocol in which the consensus process is carried out in parallel on each vector component.
By leaderless we mean that the consensus protocol is not carried on by evaluating the proposal of a single node (i.e. the proposal of a leader), instead, all nodes participating in the protocol equally work together.
%By leaderless we mean that there is not a single node that proposes an output and then the network either accepts or rejects it.

Our setting is derived from a context in which some events are to be registered in a distributed ledger and no observer has a privileged point of view, contrary to the standard use-case in which blocks are proposed to the network by a single actor (e.g. by the node that first completes the creation of a block, or  by an elected temporary leader).
%With this model we can describe a setting where some events are to be registered in a distributed ledger, with a set of observers in charge of reporting on them.
Of course different point of views may lead to different observations, so it is necessary to reconcile these differences in order to keep a common coherent ledger, even more so when taking into account the possibility of malicious behaviour from some observers.
A more specific application could be the timestamping  of events in a permissionless blockchain.
In this setting some users may be required to perform specific tasks within prescribed time limits, so our consensus protocol may be used to certify their good behaviour.
For example we may consider a blockchain that employs sharding and allocates block creation to miners in pre-determined time-slots (a technique employed by various existing platforms, e.g. EOS \cite{EOS2017} and Cardano \cite{kiayias2017ouroboros}): our BA protocol may be used to certify that blocks are indeed created during their prescribed time intervals, thus preventing attacks in which miners either delay block creation or pretend that previous blocks were late, leading to validation disputes and forks.

We now show, with the help of an example, the reason why in this model our leaderless and parallel approach allows for desirable outputs not readily attainable with existing alternatives.
Let $\mathcal{N}$ be a complete network with 4 honest nodes $j_1, j_2, j_3$ and $j_4$.
Each node $j_i$ in $\mathcal{N}$ observes $(X_{1}^{j_i}, X_{2}^{j_i}, X_{3}^{j_i}, X_{4}^{j_i})\in \mathbb{N}^4$, where $i\in\{1,2,3,4\}$.
Suppose that:
\begin{align*}
    O^{j_1}&=(9,2,8,4)\\
    O^{j_2}&=(9,2,7,1)\\
    O^{j_3}&=(9,3,8,1)\\
    O^{j_4}&=(0,2,8,1)
\end{align*}

If we decide to adopt a consensus protocol where a leader proposes to the network a vector to record, then the other nodes decide whether to accept it or not, a node will accept the leader's proposal only if the values it observed are the same as the ones advertised by the leader.

This can be done in at least two ways: \begin{itemize}
    \item \textit{the evaluation is performed on the whole vector}: this means that a node accepts the leader proposal only if the vector observed is the same as the vector the leader advertises. 
    In this particular context, the validity of the information that must be recorded translates to an accurate description of \emph{all} the events observed.
    Therefore, it is clear that, in our example, whoever the leader $l\in\{j_1, j_2, j_3, j_4\}$ is, the other nodes are not going to accept its proposal, since the vector observed by the leader differs from the vector of every other node.
    
    This would cause the output of the protocol to be the default vector $(\bot,\bot,\bot,\bot)$, which means that no meaningful data gets recorded. 
    
    \item \textit{the evaluation is performed in parallel on each component of the vector}: this means that a node can accept a component $c$ of the leader proposal only if the value it observes (associated to the $c$-th component) is the same as the one advertised by the leader. 
    In our simple example, whoever the leader $l\in\{j_1, j_2, j_3, j_4\}$ is, the nodes will agree on 3 components. In fact, a great majority of the network (3 out of 4 nodes) agrees on the values advertised by $l$ in 3 components of the vector, discarding the remaining component. 
    For instance, in our example, if the leader $l$ is the node $j_1$, then agreement will be reached on the vector $(9,2,8,\bot)$.
\end{itemize}

It is clear that the adoption of the second approach remarkably improves the first one. 
When the evaluation is performed on the whole vector, it is sufficient that the nodes do not agree with the leader in only one component to have all the components discarded. Whereas, the second approach allows the network to reach consensus and write in the ledger all the values proposed by the leader on which a majority of the nodes in the network agree.
This observation should convince the reader of the advantages given by the adoption of a consensus protocol which works in parallel on the vector components.

In our example, what is still undesirable is that the component that gets discarded is a component on which the majority of the network does agree.
Unluckily, the leader is part of the minority of the network which observed another value, and for this reason that component is discarded.
This emphasizes a weakness of leader-based consensus protocols when adopted to solve the consensus problem we are targeting.

With our proposal of a leaderless approach we aim to achieve a consensus protocol where the network listens to the opinion of more than a single node, thus agreeing on a vector where each component reflects the opinion of the majority.

In our example if all 4 nodes communicate to the other nodes their observed values, then agreement would be reached on the vector $(9,2,8,1)$, since a great majority of the nodes agrees on such values in each vector component.
This is the most desirable result since it is the outcome that one would expect from a group of nodes without hierarchy and whose opinions have equal value.

\section{Protocol components}
\label{prot-comp}
In this section we generalize the Binary Byzantine Agreement protocol of \cite{micali2016byzantine} and the $(n,t)$-Graded Consensus protocol of \cite{chen2016algorand}, extending them to the multidimensional case.
These sub-protocols are the two building blocks that we will use to define our MBA protocol.

\subsection{Notation}
In this paper we will use the useful notation $\#_i^s(v)$ adopted by Micali in \cite{chen2016algorand}
(or just $\#_i(v)$ when $s$ is clear) to represent the number of players from which $i$ has received the message $v$ during step $s$, counting also his own message if he has sent a message $v$ during step $s$.

Assuming that in each step $s$ a player $i$ receives exactly one message from each player $j$, if the number of players is $n$, then $\sum_{v} \#_i^s(v)=n \quad \forall i, s$.
During the protocol execution honest players should send only one message per step, so, if player $i$ receives from player $j$ two contrasting messages, then $i$ discards both so they are not included in the count when computing $\#_i()$.
Two identical messages are instead counted as one, and messages that are not properly formatted are discarded as well, so only \emph{valid} messages are considered and counted, and $\sum_{v} \#_i^s(v)\leq n \quad \forall i, s$.

Similarly to $\#_i^s(v)$, when the exchanged message is an $m$-dimensional vector of strings $\mathbf{v}=(v_1,\dots,v_m)$, we define $\#_i^s(v,c)$,  for $c\in\{1,2,\dots,m\}$ (or just $\#_i(v,c)$ when $s$ is clear) as the number of players from which $i$ has received during step $s$ a vector of strings $\mathbf{v}$ such that $v_c=v$.

When the messages exchanged by each player $j$ are $m$-dimensional vectors of strings $\mathbf{v_j}=(v_{j,1},v_{j,2},\dots,v_{j,m})\in (V\cup\{\bot\})^m$, we also define the concept of \emph{$c$-agreement}, where $c\in\{1,2,\dots,m\}$ is a specific component of the vectors.
We say that the players reached \emph{$c$-agreement} when there exists $v \in V \cup \{\bot\}$ such that for every honest player $j$, $v_{j,c}=v$.

When $c$-agreement is reached on all the components of the vector, we have that for all honest players $i,j$, $\mathbf{v_i}=\mathbf{v_j}$, hence also agreement is reached.

Finally, we write $\mathbf{0}$ and $\mathbf{1}$ to represent the vectors with in each component 0 and in each component 1, respectively.

\subsection{Multidimensional Binary Byzantine Agreement Protocol}
\label{MBBA subsection}
%\subsection{Multidimensional BBA protocol}
We now introduce a multidimensional extension of the binary byzantine agreement protocol \emph{BBA} presented by Micali in \cite{micali2016byzantine} and we will call it \emph{Multidimensional Binary Byzantine Agreement} (\emph{MBBA}).
The protocol uses a cryptographic hash function $H$ modeled as a random oracle, and we order its outputs (which are bit strings) with a standard lexicographic order.
We use also a digital signature algorithm with unique signature, and we denote with $\mult{SIG}_i(x)$ the unique signature on the bit string $x$ computed by player $i$.
That is, let $\sigma$ be any signature that verifies against $x$ and the public key of $i$ computed by any party in polynomial time, then the probability that $\sigma \ne \mult{SIG}_i(x)$ is negligible.
Note that this means that even with the private key it is infeasible to compute a different signature for the same message.
We suppose that the players' public keys are known by everyone, so every player can verify any signature.

The protocol is an iterated procedure, where at each iteration three steps are performed.
To track the iterations, it uses a counter $\gamma$ representing how many times the 3 steps loop has been performed during a single protocol execution. 
At the beginning of an MBBA Protocol execution, $\gamma=0$.
Also, the protocol requires a minimal setup: a common random string $r$ independent of the nodes' keys.

\begin{protocol}{\emph{MBBA}}
\vspace{5pt}
    Each player $i$ privately knows a bit vector $\mathbf{b}_i=(b_{i,1},b_{i,2},\dots,b_{i,m})$ and locally saves a $m$-bits vector $\mathbf{f}_i=\mathbf{0}$ .
    \begin{itemize}
        \item \textit{EXIT CHECK}. If $\mathbf{f}_i=\mathbf{1}$, player $i$ sends $\mathbf{b}_i\star=(b_{i,1},\dots,b_{i,m})\star$, outputs $o_i=\mathbf{b}_i$ and HALTS. 
        \item \textit{STEP 1}. [Coin-Fixed-To-0 Step] Each player $i$ sends $\mathbf{b}_i$. For all $c\in\{1,2,\dots,m\}$ s.t. $f_{i,c}=0$:
        \begin{enumerate}
            \item if $\#_i^1(0,c)>\frac{2}{3}n$, then $i$ sets $b_{i,c}=0$, sets $f_{i,c}=1$, and performs the EXIT CHECK.
            \item if $\#_i^1(1,c)>\frac{2}{3}n$, then $i$ sets $b_{i,c}=1$.
            \item Else $i$ sets $b_{i,c}=0$.
        \end{enumerate}
        
        \item \textit{STEP 2}. [Coin-Fixed-To-1 Step] Each player $i$ sends $\mathbf{b}_i$. For all $c\in\{1,2,\dots,m\}$ s.t. $f_{i,c}=0$:
        \begin{enumerate}
            \item if $\#_i^2(1,c)>\frac{2}{3}n$, then $i$ sets $b_{i,c}=1$, sets $f_{i,c}=1$, and performs the EXIT CHECK.
            \item if $\#_i^2(0,c)>\frac{2}{3}n$, then $i$ sets $b_{i,c}=0$.
            \item Else $i$ sets $b_{i,c}=1$.
        \end{enumerate}
        
        \item \textit{STEP 3}. [Coin-Genuinely-Flipped Step] Each player $i$ sends the signature $s_i=\mult{SIG}_i(r\| \gamma)$  and $\mathbf{b}_i$. For all $c\in\{1,\dots,m\}$ s.t. $f_{i,c}=0$:
        \begin{enumerate}
            \item if $\#_i^3(0,c)>\frac{2}{3}n$, then $i$ sets $b_{i,c}=0$.
            \item if $\#_i^3(1,c)>\frac{2}{3}n$, then $i$ sets $b_{i,c}=1$.
            \item Else, letting $P_{i}$ be the set of players $j$ who sent $i$ a valid message in STEP 3, $i$ computes $k=H(\min_{j \in P_i}H(s_j))$ and sets $b_{i,c}=k_c$ where $k_c$ is the $c$-th bit in $k$.
        \end{enumerate}
        Player $i$ increases the counter $\gamma$ by 1, and returns to STEP 1.
    \end{itemize}
\end{protocol}

The symbol $\star$ is applied to each message which must be considered final.
That is, if a player $i$ receives a message $v\star$ from a node $j$, then $i$ must pretend that, in every following step, $j$ will send the same message $v$.

\begin{remark}
When $m=1$ the protocol \emph{$m$-Dimensional BBA} is exactly the protocol \emph{BBA} described in \cite{micali2016byzantine}.
\end{remark}

\begin{thm}
\label{mBBA}
Whenever $n \ge 3t + 1$, the \emph{$m$-Dimensional BBA} protocol is an $(n, t)$-Byzantine Agreement Protocol with soundness 1.
\end{thm}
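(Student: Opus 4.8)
The plan is to reduce everything to a per-coordinate argument. In the multidimensional setting the three requirements of an $(n,t)$-BA protocol (halting, Agreement, Consistency) are equivalent to $c$-agreement together with component-wise consistency on each coordinate $c\in\{1,\dots,m\}$, and the update rules in STEPS 1--3 treat coordinates independently, the only coupling being that the genuine coin $k=H(\min_{j\in P_i}H(s_j))$ supplies all coordinates at once, with bit $k_c$ feeding coordinate $c$. So it suffices to establish the BBA guarantees one coordinate at a time and then combine them. Concretely I would fix a coordinate $c$, track the honest players' bits $b_{i,c}$ and flags $f_{i,c}$ across iterations, mirroring Micali's single-dimensional analysis (cf.\ the Remark that $m=1$ recovers BBA), and finally take a union bound over the $m$ coordinates.

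The correctness core rests on quorum-intersection counting, which is exactly where $n\ge 3t+1$ enters. First I would record two elementary facts: the honest players number $n-t>\frac{2}{3}n$, and any two message-sets of size $>\frac{2}{3}n$ overlap in more than $\frac{n}{3}>t$ players, hence in an honest one. From these I would prove three lemmas per coordinate: (i) \emph{no split fixing}: no two honest players fix $b_{\cdot,c}$ to different values in the same step, since opposite fixings would require two supermajorities that cannot be simultaneously honest-backed; (ii) \emph{fixing forces convergence}: if one honest player fixes $b_{i,c}=\beta$ (case 1 of STEP 1 or STEP 2), then the $>\frac{2}{3}n-t>\frac{n}{3}$ honest voters for $\beta$ are seen by every honest player, so every honest player sets $b_{\cdot,c}=\beta$ at the end of that step; and (iii) \emph{no cross-step split}: combining (i)--(ii), one cannot have an honest fix to $0$ in STEP 1 and to $1$ in STEP 2 of the same iteration. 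Consistency is then immediate: if every honest player starts coordinate $c$ with $\beta$, a $>\frac{2}{3}n$ supermajority for $\beta$ is visible at once and all fix to $\beta$ in the first applicable step. Agreement follows because, once all honest players agree on $\beta$ at coordinate $c$, the value is re-broadcast each step (fixed coordinates are skipped by the update loop and so never change), the supermajority persists, and every honest player fixes $\beta$ within the next iteration; the $\star$ mechanism keeps a halted player's vote counted at its agreed value, so players halting at different iterations still agree. Crucially, fixing only ever occurs at a $>\frac{2}{3}n$ supermajority and never directly from the coin, so Agreement and Consistency hold \emph{deterministically} whenever the protocol halts, which is what yields soundness exactly $1$ rather than merely a high probability.

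It then remains to prove halting with probability $1$, and this is the step I expect to be the main obstacle. For a coordinate still unfixed at STEP 3 of some iteration I would argue that if some honest player sees a $>\frac{2}{3}n$ supermajority for a value $d$ (case 1 or 2), then by quorum intersection no honest player sees the opposite supermajority, so the honest bits after STEP 3 all equal $d$ provided $k_c=d$; and if no honest player sees any supermajority, then all honest players adopt $k_c$ and agree regardless of its value. Hence, conditioned on the history, the honest players reach coordinate-$c$ agreement after STEP 3 with probability at least $\frac12$, after which the same supermajority/fixing mechanism as in the Consistency argument fixes the coordinate within the next iteration. The delicate point is justifying that $k_c$ behaves like an unbiased, adversary-unpredictable bit: modeling $H$ as a random oracle and using the uniqueness of the signatures, the quantity $\min_{j\in P_i}H(s_j)$ cannot be steered by a rushing adversary before the honest signatures are revealed, so $k=H(\min_{j\in P_i}H(s_j))$ is effectively uniform and its $m$ bits serve the $m$ coordinates, with distinct $\gamma$ giving independent oracle queries across iterations. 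This gives a per-iteration success probability bounded below by a positive constant, so each coordinate fixes after a geometrically distributed number of iterations (the ``Bernoulli-like'' behaviour announced in the abstract); a union bound over the $m$ coordinates shows every honest player halts with probability $1$, completing the verification of all $(n,t)$-BA properties with soundness $1$.
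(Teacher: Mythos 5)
Your decomposition and the deterministic part of your argument track the paper's own proof closely: your fact (ii) is \Cref{lem2} (an honest fixing forces every honest player to the same bit by quorum counting), your persistence claim is \Cref{lem1}, and Consistency and Agreement are handled as in the paper, with soundness $1$ because fixing only ever happens on a supermajority. The genuine gap is in the halting step, exactly where you anticipated trouble, but not for the reason you guard against. Your STEP 3 analysis concludes that each unfixed coordinate reaches agreement with conditional probability at least $\frac{1}{2}$, resting on the claim that when no honest player sees a supermajority, ``all honest players adopt $k_c$ and agree regardless of its value.'' This presupposes a \emph{common} coin, but $k=H(\min_{j\in P_i}H(s_j))$ depends on the set $P_i$ of valid STEP 3 messages that player $i$ received, and these sets can differ between honest players: a malicious player whose hashed signature happens to be the minimum can deliver its message to some honest players and withhold it from others, so two honest players compute different minima and hence different coins. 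Uniqueness of signatures plus the random oracle rule out \emph{biasing} the coin's value --- which is what your argument checks --- but not \emph{splitting the view}, which it does not address. In your mixed case this is fatal as stated: the adversary knows the value $d$ seen in supermajority by some honest players and can arrange for part of the coin-followers to obtain $k_c\ne d$, driving the conditional success probability to $0$ whenever the minimal signer is malicious.

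The paper's \Cref{lem3} repairs exactly this by conditioning on the event that the player $p$ with minimal hash is honest, which happens with probability $h>\frac{2}{3}$ since the random oracle makes the selection of $p$ uniform; in that event every honest player receives the same $\mult{SIG}_p(r\|\gamma)$ and computes the same $k$, giving success probability at least $h\left(\frac{1}{2}\right)^l$ for all $l$ unfixed coordinates jointly, rather than $\frac{1}{2}$ per coordinate. Since any positive per-iteration lower bound suffices for halting with probability $1$ (and the set of unfixed coordinates never grows, by \Cref{lem1}), your conclusion survives once the factor $h$ is inserted; but as written, the key probabilistic step is unjustified, and the per-coordinate bound of $\frac{1}{2}$ is not achievable against a worst-case rushing adversary.
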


The proof of such theorem follows the proof of the analogous theorem related to the protocol BBA \cite{micali2016byzantine}. However, the parallelization required some adjustments to the protocol itself, therefore the proof must be adapted accordingly.

We first prove some Lemmas which will lead us to a proof of \Cref{mBBA}.

\begin{lem}
\label{lem2}
If, at some step, an honest player $i$ sets $f_{i,c}=1$, then $c$-agreement will hold at the end of the step.
\end{lem}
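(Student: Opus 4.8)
The plan is to reduce the statement to a counting inequality about the quantities $\#_i^s(\cdot,c)$, and then to control the honest players that may have fixed component $c$ during an earlier step by a persistence argument.

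First I would note that an honest player sets $f_{i,c}=1$ only inside STEP~1 (when $\#_i^1(0,c)>\frac{2}{3}n$, fixing $b_{i,c}=0$) or inside STEP~2 (when $\#_i^2(1,c)>\frac{2}{3}n$, fixing $b_{i,c}=1$); STEP~3 never touches $\mathbf{f}$. The two cases are symmetric, so I treat the STEP~1 case. Since the values are binary, $\#_i^1(0,c)+\#_i^1(1,c)\le n$, so $\#_i^1(0,c)>\frac{2}{3}n$ forces $\#_i^1(1,c)<\frac{1}{3}n$. Because honest players broadcast, player $i$ receives the message of every honest player, hence at most $\#_i^1(1,c)<\frac{1}{3}n$ honest players sent a $1$ in component $c$. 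Therefore every other honest player $j$ sees $\#_j^1(1,c)<\frac{1}{3}n+t<\frac{2}{3}n$ (using $n\ge 3t+1$), so the ``set-to-$1$'' branch cannot trigger at $j$, and every honest $j$ that still processes $c$ sets $b_{j,c}=0$.

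This already shows that all honest players holding $f_{\cdot,c}=0$ at the start of the step adopt the common value $0$ (resp.\ $1$ in the STEP~2 case). The remaining gap, and the main obstacle, is the honest players that entered the step with $f_{j,c}=1$: these skip the update, so I must guarantee that the value they fixed earlier agrees with the value just forced. I would close this gap by proving, by induction over the steps, the invariant: as soon as some honest player has $f_{\cdot,c}=1$, there is a value $w_c$ such that every honest player holds $w_c$ in component $c$ and every flagged honest player has $b_{\cdot,c}=w_c$. For the base case one takes the first step at which any honest player flags $c$; there every honest player has $f_{\cdot,c}=0$, so the counting argument above yields full $c$-agreement. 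For the inductive step, if $c$-agreement on $w_c$ already holds then all honest players broadcast $w_c$, so each honest player sees more than $\frac{2}{3}n$ copies of $w_c$ and fewer than $\frac{1}{3}n$ copies of $1-w_c$; consequently in each of the three step types the value $w_c$ is re-selected (in particular the coin-flip branch of STEP~3 is never reached), $w_c$ is preserved, and any flag newly set in that step is set to $w_c$.

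The lemma is then immediate: if an honest player $i$ sets $f_{i,c}=1$ at some step $s$, the invariant applied at step $s$ says that $c$-agreement, on the value $i$ fixed, holds at the end of $s$. I expect the counting inequality to be routine once $n\ge 3t+1$ and the broadcast assumption are invoked; the delicate point is the bookkeeping in the persistence invariant, in particular checking that a value fixed by some honest player can never conflict with a value being fixed by a different honest player at a later step.
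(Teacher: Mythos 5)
Your proposal is correct, and its core is the same threshold-counting argument as the paper's: from $\#_i^1(0,c)>\frac{2}{3}n$ you exclude the set-to-$1$ branch at every honest $j$ and conclude that every honest player still processing component $c$ adopts $0$ (the paper counts the zeros instead, observing that every honest $j$ sees more than $\frac{1}{3}n$ zeros and then runs the case analysis on sub-steps 1 and 3, but the two counts are equivalent). Where you genuinely diverge is in the treatment of honest players that entered the step already holding $f_{j,c}=1$: the paper's proof of \Cref{lem2} silently restricts attention to players with $f_{j,c}=0$ and never discusses previously flagged ones, whereas you make this the explicit crux and close it with an induction whose invariant couples $c$-agreement with flag-value consistency. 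Your inductive step is, in substance, the paper's \Cref{lem1} (persistence of $c$-agreement across the three step types, with the coin-flip branch never reached) augmented with the observation that any newly set flag must equal the persisted value $w_c$; the paper instead factors persistence out as a separate lemma stated after \Cref{lem2}, and the combination of the two lemmas, applied from the first step at which any honest player flags $c$, recovers exactly your invariant. So your version buys self-containedness and plugs a small lacuna that the paper leaves implicit, at the cost of duplicating inside \Cref{lem2} an argument the paper keeps modular because \Cref{lem1} is reused later (in the halting argument of \Cref{mBBA} and in the step-count analysis of \Cref{mainT}). One could quibble that your base case should also note that both fixing thresholds $\#^s(0,c)>\frac{2}{3}n$ and $\#^s(1,c)>\frac{2}{3}n$ cannot fire at two different honest players in the same step (since an honest player's messages reach everyone and $\frac{2}{3}n+\frac{2}{3}n-t>n$), but your counting bound $\#_j(1,c)<\frac{1}{3}n+t<\frac{2}{3}n$ already delivers this, so no gap remains.
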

\begin{proof}
First, note that an honest player $i$ can set $f_{i,c}=1$ only during a Coin-Fixed-To-0 Step or a Coin-Fixed-To-1 Step, so STEP 3 is not taken under consideration.

Assume that $i$ fixes $b_{i,c}=0$ and $f_{i,c}=1$ in a Coin-Fixed-To-0 Step.
This means that $\#_i^1(0,c)>\frac{2}{3}n$, so more than $\frac{1}{3}n$ honest players have sent 0 at the start of such step (in fact the malicious players are less than $\frac{1}{3}n$).
Since honest players send to everyone the same message, then $\#_j^1(0,c) > \frac{1}{3}n$ for each other honest player $j$.
For such players two mutually exclusive cases may occur :
\begin{enumerate}
    \item $\#_j^1(0,c) >\frac{2}{3}n$: in this case $j$ sets $b_{j,c} = 0$ (and also $f_{j,c}=1$) in sub-step 1 of STEP 1.
    \item $\frac{1}{3}n < \#_j(0,c) \le \frac{2}{3}n$: in this case $j$ must enter the third sub-step of STEP 1 hence sets $b_{j,c} = 0$.
\end{enumerate}
In any case we have that every honest player $j$ sets $b_{j,c} = 0$, thus $c$-agreement holds on 0 at the end the Coin-Fixed-To-0 Step.

To conclude our proof, note that a symmetric argument shows that $c$-agreement holds on 1 at the end of a Coin-Fixed-To-1 Step in which an honest player $i$ sets $b_{i,c}=1$ and $f_{i,c}=1$.
\end{proof}

\begin{lem}
\label{lem1}
For each component $c\in\{1,2,\dots,m\}$,
if at some step $c$-agreement holds, then it continues to hold in the next steps.
\end{lem}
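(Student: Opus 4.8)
The plan is to prove the statement by induction on the number of steps, so it suffices to show that a single step cannot destroy $c$-agreement once it holds. Suppose therefore that at the beginning of some step every honest player $j$ holds the same bit $b_{j,c}=v$ for a fixed $v\in\{0,1\}$. Since fewer than $\frac{1}{3}n$ players are malicious, more than $\frac{2}{3}n$ of the players are honest, and each of them broadcasts a vector whose $c$-th component equals $v$. This includes the honest players who have already set $f_{j,c}=1$, because they keep re-sending their (now frozen) value $v$ in every subsequent step. Hence every honest player $i$ observes $\#_i(v,c)>\frac{2}{3}n$, and consequently $\#_i(1-v,c)<\frac{1}{3}n$, since the two counts sum to at most $n$.

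Next I would split into the honest players according to their flag. A player with $f_{i,c}=1$ leaves component $c$ untouched, so its value stays $v$ automatically; it remains to check the honest players with $f_{i,c}=0$, who actually process the component. For these I would run through the three step types. In a Coin-Fixed-To-$0$ step with $v=0$ the first sub-step fires and sets $b_{i,c}=0$; with $v=1$ the first sub-step cannot fire (as $\#_i(0,c)<\frac{1}{3}n$) while the second does, leaving $b_{i,c}=1$. The Coin-Fixed-To-$1$ step is handled symmetrically. In the Coin-Genuinely-Flipped step the bound $\#_i(v,c)>\frac{2}{3}n$ makes exactly the sub-step that re-sets $b_{i,c}=v$ trigger. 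In every case the value $v$ is preserved, so all honest players again agree on $v$ at the end of the step, and the induction closes.

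The delicate point, and the one I would treat most carefully, is the last sub-step of each step, in particular the coin-flip branch of the Coin-Genuinely-Flipped step, which is the only place where agreement could be broken by an independent random choice. The whole argument hinges on showing that the threshold $\#_i(v,c)>\frac{2}{3}n$ guarantees that neither the opposite-value threshold nor the final else/coin branch is ever reached, so the randomness never comes into play while agreement holds. The second thing to watch is the interaction between honest players who have frozen the component ($f_{i,c}=1$) and those who have not: I would make explicit that the frozen players still contribute their value $v$ to every count, which is exactly what keeps the counts of the non-frozen players above the threshold. Once these two observations are pinned down, the case analysis above is essentially routine.
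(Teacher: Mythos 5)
Your proposal is correct and follows essentially the same route as the paper's proof: both argue that once all honest players hold $b_{j,c}=v$, the count $\#_i(v,c)>\frac{2}{3}n$ forces exactly the value-preserving sub-step to fire in each of the three step types, so the coin-flip branch is never reached. You are in fact slightly more explicit than the paper on two points it leaves implicit --- treating both $v=0$ and $v=1$ rather than invoking symmetry, and noting that players with $f_{i,c}=1$ (or halted players, via the $\star$ convention) still contribute their frozen value to the counts --- which is careful but does not change the argument.
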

\begin{proof}
Assume that, for some component $c$, the players reached $c$-agreement at the end of STEP $s$.
We want to show that in each subsequent step $c$-agreement still holds.
We assume that $c$-agreement has been reached on 0. A similar analysis can be done in case $c$-agreement is reached on 1. 

Let us consider the three possible options for the step $s$:
\begin{itemize}
    \item $s$ is a STEP 3, so the next step is STEP 1.
    
    At the beginning of STEP 1 each honest player $i$ sends its vector $\mathbf{b}_i$.
    Since agreement has been reached on component $c$ during the previous step, for all honest $i$, $b_{i,c}=0$, so $\#_i^1(0,c)> \frac{2}{3}n$ given that the honest players are more than $\frac{2}{3}n$.
    This means that in STEP~1, for the component $c$, each honest player $i$ enters in the first sub-step, so $c$-agreement still holds, since the component $c$ is left unchanged.

    \item $s$ is a STEP 1, so the next step is STEP 2.
    
    At the beginning of STEP 2 each honest player $i$ sends its vector $\mathbf{b}_i$.
    As in the previous case, since agreement has been reached on component $c$ during the previous step, for all honest $i$ we have that $b_{i,c}=0$ and $\#_i^2(0,c) >\frac{2}{3}n$.
    This means that in STEP 2, for the component $c$, each honest player $i$ enters in the second sub-step, so $c$-agreement still holds since the component $c$ is left unchanged.

    \item $s$ is a STEP 2, so the next step is STEP 3.
    
    At the beginning of STEP 3 each honest player $i$ sends its vector $\mathbf{b}_i$.
    Again, since agreement has been reached on component $c$ during the previous step, for all honest $i$ we have that $b_{i,c}=0$ and $\#_i^3(0,c) >\frac{2}{3}n$.
    This means that in STEP 3, for the component $c$, each honest player $i$ enters in the first sub-step, so $c$-agreement still holds since the component $c$ is left unchanged.
\end{itemize}

Thus, if $c$-agreement holds at some step, it will keep holding during the next step, and so on until every player halts.
\end{proof}

%Another useful result is the following.
\begin{lem}
\label{lem3}
Let $h>\frac{2}{3}$ be the ratio of honest players in the network.
If, at the start of an execution of STEP 3, no player has halted, i.e. agreement has not been reached yet on a bit vector, then, being $l$ the number of vector components $c$ on which $c$-agreement has not been reached, the players will be in agreement at the end of this step with probability at least $h(\frac{1}{2})^l > \frac{2}{3}(\frac{1}{2})^l$. 
\end{lem}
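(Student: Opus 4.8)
The plan is to identify the honest player whose signature hashes to the global minimum as a ``leader'' that, with good probability, forces all honest players to flip the \emph{same} coin $k$, and then to argue that for each of the $l$ undecided components this shared coin lands on the bit needed for $c$-agreement with probability $\tfrac12$, independently across components.

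First I would isolate the event $E$ that the player $j^*$ achieving $\min_{j} H(s_j)$ over all $n$ players is honest. Since $H$ is a random oracle and the signatures are unique (hence fixed and unforgeable for honest players, so their hashes are uniform and beyond the adversary's control), the minimizer is a uniformly random player, and because honest players always broadcast a valid signature we get $\Pr[E]\ge h>\tfrac23$. The point of $E$ is that every honest $i$ has $j^*\in P_i$ while every other $j\in P_i$ satisfies $H(s_j)\ge H(s_{j^*})$, so $\min_{j\in P_i}H(s_j)=H(s_{j^*})$ for all honest $i$; consequently all honest players compute the very same $k=H(H(s_{j^*}))$.

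Next, conditioning on $E$, I would analyze a single undecided component $c$. Using the counting argument of \Cref{lem2} (if an honest player sees more than $\tfrac23 n$ copies of a bit then more than $\tfrac13 n$ honest players sent it, so no honest player can see a $\tfrac23 n$ majority for the opposite bit), I would show that honest players cannot split between sub-step 1 and sub-step 2 on component $c$. Hence there is a well-defined target bit $\tau_c$: if some honest player fixes $b_{\cdot,c}$ via sub-step 1 or 2 then every other honest player either agrees directly or falls into sub-step 3, and $c$-agreement is reached precisely when $k_c=\tau_c$ (and if no honest player enters sub-steps 1--2 then all adopt $k_c$ and $c$-agreement holds unconditionally). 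The targets $\tau_c$ are determined by the exchanged bit vectors, whereas $k$ is the random-oracle image of an honest signature; so each bit $k_c$ is uniform and independent of $\tau_c$, giving $c$-agreement with probability at least $\tfrac12$ per undecided component and $\left(\tfrac12\right)^l$ jointly over the $l$ of them.

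Finally I would combine the two stages: the $m-l$ components already in $c$-agreement keep it by \Cref{lem1}, so full agreement holds as soon as all $l$ undecided components reach $c$-agreement, which under $E$ happens with probability $\left(\tfrac12\right)^l$; multiplying by $\Pr[E]\ge h$ yields the bound $h\left(\tfrac12\right)^l>\tfrac23\left(\tfrac12\right)^l$. The main obstacle is the independence claim in the last step: I must argue carefully that the shared coin $k$ is genuinely unpredictable at the moment the adversary fixes the STEP 3 messages that determine the targets $\tau_c$. This is exactly where the unique-signature and random-oracle assumptions, together with the fact that honest players' STEP 3 bits are frozen coming out of STEP 2, become essential, as they are what prevents a rushing adversary from steering some $\tau_c$ away from $k_c$.
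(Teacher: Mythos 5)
Your proposal follows essentially the same route as the paper's proof: you condition on the event (probability at least $h$) that the globally minimal hashed signature belongs to an honest player, so that all honest players derive the same coin $k$; you use the $\frac{2}{3}n$-counting argument to rule out honest players splitting between sub-steps 1 and 2 of STEP 3; and you conclude $c$-agreement with probability at least $\frac{1}{2}$ per undecided component, independently over the $l$ components, yielding $h\left(\frac{1}{2}\right)^l$. If anything, you are more explicit than the paper on two points it treats informally --- why the global minimizer's hash dominates every honest player's set $P_i$ (so all honest players compute the same $k$), and why a rushing adversary cannot steer the forceable target bit $\tau_c$ after seeing $k_c$ (the honest bits being frozen at the end of STEP 2, and at most one target being forceable by the counting bound) --- so your argument is sound and, in those spots, slightly tighter than the paper's.
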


\begin{proof}
Let $\gamma$ be the current value of the counter, and let $P_i$ be the set of players from which $i$ has received a valid message at the beginning of STEP 3.
By the uniqueness property of the underlying digital signature scheme, $i$ can compute $\mathbf{k}=H({min}_{j \in P_i}H(\mult{SIG}_j(r\|\gamma)))$, and then $k_1,\dots,k_m$ are well
defined.

Note that the selection of the player $p$ whose hashed digital signature is minimal is a random selection with uniform distribution, under the assumption that $H$ is a \emph{random oracle}.
This means that $p$ will be an honest player with probability $h>\frac{2}{3}$, and in this case it will send its message to every player.
In particular all the honest players will receive $\mult{SIG}_{p}(r\|\gamma)$, from which the same values $k_1,\dots,k_m$ will be computed by all the honest players who perform sub-step 3 of STEP 3.

Let $\{c_i\}_{i=1,\dots,l}$ be the set of components of the bit vector on which $c_i$-agreement does not hold, and let us assume that the player $p$ is honest.
For each component $c_i$ of the bit vector, notice that it is impossible that some honest players perform sub-step 1 and some sub-step 2 of STEP 3.
In fact, being $t<\frac{1}{3}n$ the number of malicious nodes, if a node $i$ has received more than $\frac{2}{3}n$ messages for 1 and a node $j$ has received more than $\frac{2}{3}n$ valid messages for 0, then $i$ has received at least $\frac{2}{3}n-t>\frac{1}{3}n$ messages for 1 from honest nodes, and such messages have reached also $j$.
However, $j$ has received at least $\frac{2}{3}n$ valid messages for 0, which is a contradiction since $\frac{2}{3}n+\frac{2}{3}n-t>n$, and $j$ cannot receive more than $n$ valid messages from distinct nodes.

Therefore there are five exhaustive cases that must be considered and may lead the honest players to $c_i$-agreement:
\begin{itemize}
    \item \emph{All honest players update their $c_i$th component according to sub-step 1 of STEP 3.}
    
    In this case $c_i$-agreement hols on 0.
    
    \item \emph{All honest players update their $c_i$th component according to sub-step 2 of STEP 3.}
    
    In this case, $c_i$-agreement holds on 1.
    
    \item \emph{All honest players update their $c_i$th component according to sub-step 3 of STEP 3.}
    
    In this case, at the end of Step 3, $c_i$-agreement holds on $k_{c_i}$ (we assume $p$ is honest).
    
    \item \emph{Some honest players update their $c_i$th component according to sub-step 1 of STEP 3 and all the others according to sub-step 3 of STEP 3.}
    
    The honest players updating the value according to sub-step 1 will set the $c_i$th component of their vector to 0, while the ones updating their $c_i$th component according to sub-step 3 will set it to $k_{c_i}$ which is 0 with probability $\frac{1}{2}$.
    This means that $c_i$-agreement is reached on 0 with probability $\frac{1}{2}$.
    
    \item \emph{Some honest players update their $c_i$th component according to sub-step 2 of step 3 and all the others according to sub-step 3 of STEP 3.}
    
    The honest players updating the value according to sub-step 1 will set the $c_i$th component of their vector to 1, while the ones updating their $c_i$th component according to sub-step 3 will set it to $k_{c_i}$ which is 1 with probability $\frac{1}{2}$.
    This means that $c_i$-agreement is reached on 1 with probability $\frac{1}{2}$.
\end{itemize}

When the player $p$ is honest, we can assume that the values $k_c$ are chosen randomly and independently, under the assumption that $H$ is a random oracle.
Hence at the end of STEP 3 the players will reach $c$-agreement for all values of $c \in \{1,\dots,m\}$, which means agreement, with probability at least $(\frac{1}{2})^l$.
Thus, given that the probability of having the player $p$ honest is $h > \frac{2}{3}$, we can conclude that, anytime the players reach STEP 3 of the protocol, before the end of such step they will be in agreement with probability at least $h(\frac{1}{2})^l > \frac{2}{3}(\frac{1}{2})^l$.
\end{proof}

We now can prove \Cref{mBBA}.
\begin{proof}
We prove the following properties that characterize a Byzantine agreement protocol with soundness $\sigma=1$.
\begin{enumerate}
    \item \label{Halt} All honest players HALT with probability 1.
    
    If at the beginning of STEP 3 the players are not in $c$-agreement over $l$ components, with probability  at least $\frac{2}{3} (\frac{1}{2})^l > 0$ they will be in agreement at the end of that step, hence with the growing of the number of STEP 3 executions the probability to reach agreement converges to 1. Note that, at every STEP 3 execution, the number of components not agreed upon can not increase, so the probability to end the protocol in the next STEP 3 execution does not decrease.
    
    Once agreement is reached, the honest players will HALT in the following 2 steps since it will finalize the zeroes in STEP~1 and the ones in STEP~2 (updating the locally saved vector $\mathbf{f}$ with ones corresponding to the finalized components).
    
    \item $o_i = o_j$ for all honest players $i$ and $j$.
    
    This is true because by point \ref{Halt} all honest players HALT, thus they have $\mathbf{f}=\mathbf{1}$ and by applying \Cref{lem2} to every component we can conclude that they are in agreement, so $o_i = o_j$.
    
    \item If the initial value of every honest player $i$ is a vector $\mathbf{b}_i=\mathbf{b}$, then $o_i=\mathbf{b}$ for every honest player $i$.
    
    Note that $c$-agreement holds at the start of the protocol for every $c$, and by \Cref{lem2} it will continue to hold.
    So every honest player sends the same vector $\mathbf{b}$ at the beginning of each round, so $\#_i(b_c, c) > \frac{2}{3}n$ for every $c$ and for every honest $i$.
    It is exhaustive to consider the following two cases:
    \begin{itemize}
        \item if $\mathbf{b}$ is the vector of all zeros, then all honest players enter sub-step 1 of STEP 1 and when they verify the EXIT CHECK, once they have updated all their components, they will halt setting $o_i=\mathbf{b}$.
        
        \item Otherwise during STEP 1, for all $c \in \{1,2,\dots,m\}$ if $b_{i,c}=0$ they will set $b_{i,c}=0$, and update $\mathbf{f}_i$ by setting $f_{i,c}=1$, but will not halt since for some $c$ we have $b_{i,c}=1$, thus $f_{i, c} = 0$.
        Each of these components $c$ will be finalized during STEP 2 when it will be set $b_{i,c}=1$ and $f_{i,c}=1$.
        In this case, once the last coordinate is updated the EXIT CHECK will be verified, and every honest player $i$ will output $o_i=\mathbf{b}$.
    \end{itemize}
\end{enumerate}
\end{proof}

\subsection{Multidimensional Graded Consensus Protocol}
\label{MGC subsection}
The notion of \emph{Graded Consensus} protocol, introduced by Micali in \cite{chen2016algorand}, is much weaker than Byzantine agreement but allows the protocol players to gain some information about the distribution of the input values possessed by the network participants.

We now provide the definition of an extension of the concept of $(n,t)$-Graded Consensus to the multidimensional case. 
%This concept will be more useful to our purposes.

\begin{defn}[\emph{$m$-Dimensional $(n,t)$-Graded Consensus Protocol}] \label{mGradedConsensus}
	Let $\mathcal{P}$ be a protocol in which the set of all players is common knowledge, and each player $i$ privately knows an arbitrary initial vector of messages $\mathbf{v_i'}=(v_{i,1}',v_{i,2}',\dots,v_{i,m}')$ where  $v_{i,h}'\in V\cup\{\bot\}$.
	
	We say that $\mathcal{P}$ is an \emph{$m$-dimensional $(n, t)$-graded consensus protocol} if, in every execution with $n$ players of which at most $t$ are malicious, every honest player $i$ halts outputting a vector of value-grade pairs $o_i=(\mathbf{v}_i,\mathbf{g}_i)=((v_{i,1},g_{i,1}),(v_{i,2},g_{i,2}),\dots,(v_{i,m},g_{i,m}))$ where $g_{i,c}\in\{0,1,2\}$ and $v_{i,c}\in V\cup\{\bot\}$ for every $i$ and $c$, so as to satisfy the following three conditions:
	\begin{enumerate}
		\item For all honest players $i$ and $j$, for all $c \in \{1,\dots,m\}$, we have that $\lvert g_{i,c}-g_{j,c} \rvert \leq 1$.
		\item For all honest players $i$ and $j$, for all $c \in \{1,\dots,m\}$ and for all positive $ g_{i,c},g_{j,c}$ we have $v_{i,c}=v_{j,c}\ne\bot$.
		\item
		If $v_{1,c}'=v_{2,c}'=\dots=v_{n,c}'=v_c$ for some value $v_c \in V\cup \{\bot\}$, then for all honest players the output component $c$ is $(v_{i,c},g_{i,c})=(v_c,2)$ if $v_c\ne\bot$, $(v_{i,c},g_{i,c})=(\bot,0)$ if $v_c=\bot$ .
	\end{enumerate}
\end{defn}

\begin{remark}
A 1-dimensional $(n,t)$-graded consensus protocol is an $(n,t)$-graded consensus protocol according to the definition in \cite{chen2016algorand}.
\end{remark}

\begin{remark} \label{rem3}
An immediate consequence of condition 3 is that if the initial vectors of the players are equal $\mathbf{v}_1'=\dots=\mathbf{v}_n'=\mathbf{v}=(v_1,\dots,v_m)\in (V\cup \{\bot\})^m$ then also the outputs will be the same, where $(v_{i,c},g_{i,c})=(v_{i,c}',2)$ when $v_{i,c}\ne\bot$ and $(v_{i,c},g_{i,c})=(\bot,0)$ when $v_{i,c}'=\bot$.
\end{remark}

In the same way we described the $m$-Dimensional BBA we define a multidimensional extension of the GC protocol, \emph{Multidimensional Graded Consensus} (\emph{MGC}).\\
 
\begin{protocol}{\emph{MGC}}
	\vspace{5pt}
	Each player $i$ privately knows some value $\mathbf{v}'_i=({v'}_{i,1},{v'}_{i,2},\dots,{v'}_{i,m})$.\\
	\begin{itemize}
		\item \textit{STEP 1.} Each player $i$ sends $\mathbf{v}'_i$ to all players.
		
		\item \textit{STEP 2.} Each player $i$ sends to all players the vector $\mathbf{\tilde{v}_i}=(\tilde{v}_{i,1},\tilde{v}_{i,2},\dots,\tilde{v}_{i,m})$ where $\tilde{v}_{i,c}=v$ if and only if $\#_i^1(v,c)\ge \lfloor\frac{2n}{3}\rfloor+1$, otherwise $\tilde{v}_{i,c}=\bot$.
		
		\item \textit{OUTPUT DETERMINATION.} Each player $i$ outputs the vector of pairs $(\mathbf{v}_i,\mathbf{g}_i)=((v_{i,1},g_{i,1}),(v_{i,2},g_{i,2}),\dots,(v_{i,m},g_{i,m}))$ where for all $c \in \{1,\dots,m\}$, $(v_{i,c},g_{i,c})$  is computed as follows:	\begin{itemize}
			\item If, for some $x \ne \bot$, $\#_i^2(x,c)\geq \lfloor\frac{2n}{3}\rfloor+1$, then $(v_{i,c},g_{i,c})=(x,2)$.
			\item If, for some $x \ne \bot$, $\#_i^2(x,c)\geq \lfloor\frac{n}{3}\rfloor+1$, then $(v_{i,c},g_{i,c})=(x,1)$.
			\item Else, $(v_{i,c},g_{i,c})=(\bot,0)$.
		\end{itemize}
	\end{itemize}
\end{protocol}

\begin{thm} \label{mGC_thm}
	If $t=\lfloor\frac{n-1}{3}\rfloor$ then MGC is an $m$-Dimensional $(n,t)$-graded consensus protocol.
\end{thm}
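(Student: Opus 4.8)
The plan is to fix an arbitrary component $c \in \{1, \dots, m\}$ and verify the three defining conditions for that coordinate alone. Since the counts $\#_i^s(\cdot, c)$ and every update rule of MGC depend only on the $c$-th coordinate, the components evolve independently, so establishing each condition per-component yields it for the whole output vector. The engine of every step is a quorum-intersection argument, so I would first record the arithmetic consequences of the hypothesis $t = \lfloor\frac{n-1}{3}\rfloor$: the number of honest players equals $n - t = \lfloor\frac{2n}{3}\rfloor + 1$ (checking the residues $n \equiv 0, 1, 2 \pmod 3$ separately), any set of at least $\lfloor\frac{2n}{3}\rfloor + 1$ senders contains at least $\lfloor\frac{n}{3}\rfloor + 1$ honest players, and any two such sets intersect in more than $t$ players.

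The key lemma I would isolate is that, for each component $c$, the honest players send \emph{at most one} non-$\bot$ value in STEP 2. Indeed, an honest player emits $\tilde{v}_{i,c} = x \ne \bot$ only after receiving at least $\lfloor\frac{2n}{3}\rfloor + 1$ STEP-1 messages carrying $x$ in coordinate $c$; if two honest players committed to distinct values $x \ne y$, their STEP-1 quorums would overlap in more than $t$ players, hence in an honest player who would have had to broadcast both $x$ and $y$ in a single STEP-1 message, a contradiction. With this in hand, condition 2 follows: a positive grade $g_{i,c} \ge 1$ means $i$ saw at least $\lfloor\frac{n}{3}\rfloor + 1$ STEP-2 messages for $v_{i,c}$, and since $\lfloor\frac{n}{3}\rfloor + 1 - t \ge 1$ at least one of these came from an honest player; as the honest players advertise a single common non-$\bot$ value in STEP 2, any two honest players with positive grade must agree on it, giving $v_{i,c} = v_{j,c} \ne \bot$.

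For condition 1 I would argue that an honest player with $g_{i,c} = 2$ received at least $\lfloor\frac{2n}{3}\rfloor + 1$ STEP-2 messages for some $x$, of which at least $\lfloor\frac{n}{3}\rfloor + 1$ are honest and hence reach every other honest player $j$; thus $\#_j^2(x,c) \ge \lfloor\frac{n}{3}\rfloor + 1$ and $g_{j,c} \ge 1$, ruling out the pair $(2,0)$ and leaving only $|g_{i,c} - g_{j,c}| \le 1$. Condition 3 splits into two cases. If $v'_{k,c} = v_c \ne \bot$ for all $k$, the $\lfloor\frac{2n}{3}\rfloor + 1$ honest players all broadcast $v_c$ in STEP 1, so every honest player sets $\tilde{v}_{i,c} = v_c$, and these honest STEP-2 messages again number at least $\lfloor\frac{2n}{3}\rfloor + 1$, forcing $(v_{i,c}, g_{i,c}) = (v_c, 2)$. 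If $v_c = \bot$, no honest player holds a non-$\bot$ value, so in STEP 1 only the $\le t$ malicious players can advertise a given non-$\bot$ $x$; this is below the STEP-2 commitment threshold, hence every honest player sends $\bot$ in STEP 2 as well, leaving at most $t \le \lfloor\frac{n}{3}\rfloor$ advertisements of any non-$\bot$ $x$ in STEP 2, which stays below the grade-1 threshold $\lfloor\frac{n}{3}\rfloor + 1$, so OUTPUT DETERMINATION returns $(\bot, 0)$.

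The main obstacle I anticipate is purely bookkeeping: the floor-function inequalities underpinning the quorum intersections, in particular $2(\lfloor\frac{2n}{3}\rfloor + 1) - n > t$ and $\lfloor\frac{2n}{3}\rfloor + 1 - t \ge \lfloor\frac{n}{3}\rfloor + 1$, must be verified uniformly across $n \equiv 0, 1, 2 \pmod 3$, and it is precisely there that the choice $t = \lfloor\frac{n-1}{3}\rfloor$ is needed, as it makes the honest majority exactly large enough to meet the STEP-2 thresholds.
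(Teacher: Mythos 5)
Your proposal is correct and follows essentially the same route as the paper's proof: a per-component analysis in which condition 1 follows from the at least $\lfloor\frac{n}{3}\rfloor+1$ honest senders inside any grade-2 quorum, condition 2 from the impossibility of two honest players seeing STEP-1 quorums of size $\lfloor\frac{2n}{3}\rfloor+1$ for distinct values, and condition 3 by direct computation in the two cases $v_c\ne\bot$ and $v_c=\bot$. Your packaging of condition 2 as a quorum-intersection lemma (two such quorums overlap in more than $t$ players, hence in an honest one who would have sent two different STEP-1 messages) is arithmetically the same fact as the paper's count $2\left(\lfloor\frac{2n}{3}\rfloor+1\right)>n+t$, merely stated in contrapositive form.
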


We have extended the Graded Consensus protocol GC introduced in \cite{chen2016algorand} executing simultaneously $m$ instances of GC protocol. 

The protocol GC is derived by the \emph{Gradecast Protocol} whose properties are proved in \cite{feldman1997optimal}.
An explicit proof for the properties of the GC protocol would be enough to be sufficiently assured that the properties of MGC hold. However an explicit proof of GC properties is not provided in \cite{chen2016algorand}, therefore, for sake of clarity, we will prove \Cref{mGC_thm}.

\begin{proof}[Proof of \Cref{mGC_thm}]
    The analysis is performed on a generic component $c$, and the final result is a consequence of the properties holding on every component. \\
    \begin{enumerate}
        \item We first prove that it is impossible for two honest players $i$ and $j$ to end the protocol with $c$-th output components $(v_{i,c},g_{i,c})$ and $(v_{j,c},g_{j,c})$ such that $\lvert g_{i,c}-g_{j,c} \rvert = 2$.
	
        Let us assume $g_{i,c}=0$ (hence $v_{i,c}=\bot$) and $g_{j,c}=2$ (hence $v_{j,c}\ne\bot$).
        This means that, at the end of STEP 2, $\#_j^2(v_j,c)\ge \lfloor\frac{2n}{3}\rfloor+1$.
        Out of these messages, the honest players have sent at least
        $ \lfloor\frac{2n}{3}\rfloor+1 - \lfloor\frac{n-1}{3}\rfloor > \frac{2n}{3} - \frac{n}{3} = \frac{n}{3} $ of them.
        Note that $\lfloor\frac{n}{3}\rfloor+1$ is the smallest integer strictly greater than $\frac{n}{3}$, so the honest players have sent at least $\lfloor\frac{n}{3}\rfloor+1$ messages.
        Since the messages sent by honest players are received both by $j$ and by $i$, $\#_i^2(v_j,c)\ge \lfloor \frac{n}{3} \rfloor +1$ hence $g_{i,c}$ cannot be 0.
    
        \item We now prove that if $i, j$ are honest players and $g_{i,c},g_{j,c}>0$, then $v_{i,c}=v_{j,c}$.
	
	    Assume that $g_{i,c},g_{j,c}>0$ and $v_{i,c}\ne v_{j,c}$.
	    This means that $\#_i^2(v_i,c)\ge \lfloor\frac{n}{3}\rfloor+1$ and $\#_j^2(v_j,c)\ge \lfloor\frac{n}{3}\rfloor+1$.
	    Since there are at most $\lfloor\frac{n-1}{3}\rfloor$ malicious players, at least one of each of these sets of messages comes from an honest player.
	    This means that, at the beginning of STEP 2, at least two distinct honest players $h,k$ have received $\#_h^1(v_i,c)\ge \lfloor\frac{2n}{3}\rfloor+1$ and $\#_k^1(v_j,c)\ge \lfloor\frac{2n}{3}\rfloor+1$ messages.
	    This is impossible since honest players sent the same messages both to $h$ and $k$, and the number of malicious players is $\hat{t}\le t<\frac{n}{3}$.
	    The malicious players may have sent different messages to $h$ and $k$, but they can have sent no more than $t = \lfloor \frac{n-1}{3} \rfloor$ messages to each of the honest players.
	    Considering all the distinct messages received by either $h$ or $k$ at the end of STEP 2, we have that at most $n-\hat{t}$ of them have been sent by honest players, and at most $2\hat{t}$ by malicious players, so there are at most $n+\hat{t}\le n+t$ distinct messages.
	    However, considering the messages received for the values $v_i$ and $v_j$, we have that $h$ and $k$ have received at least $2(\lfloor\frac{2n}{3}\rfloor + 1) > 2 (\frac{2n}{3}) = n + \frac{n}{3} >  n + \lfloor\frac{n-1}{3}\rfloor \ge n + t$ messages, which contradicts the fact that the number of messages received by $h$ and $k$ cannot exceed $n+t$.
	    
	    \item 
	    \begin{enumerate}
	        \item We now prove that if $v'_{i,c}=v_c\ne\bot$ $\forall i \in\{1,\ldots,n\}$ for some value $v_c$, then for all honest players the output is $(v_{i,c},g_{i,c})=(v_c,2)$.
	        
	        This is true because at the end of STEP 1 each honest player broadcasts $v_c$.
    	    Note that the honest players are at least $n - \lfloor\frac{n-1}{3}\rfloor > n - \frac{n}{3} = \frac{2n}{3}$, and since there are an integral number of them they are at least $\lfloor\frac{2n}{3}\rfloor+1$.
    	    This means that for each honest player $i$, $\#_i^1(v_c,c)\ge\lfloor\frac{2n}{3}\rfloor+1$ thus each honest player in STEP 2 broadcasts $v_c$.
    	    Again for each honest player $i$ must be $\#_i^2(v_c,c)\ge\lfloor\frac{2n}{3}\rfloor+1$, and this implies $(v_{i,c},g_{i,c})=(v_c,2)$.
    	    
    	    \item Finally, we prove that if  $v'_{i,c}=\bot$ $\forall i \in\{1,\ldots,n\}$, then all honest players output $(v_{i,c},g_{i,c})=(\bot,0)$.
    	    
    	    In this case at the end of STEP 1 each honest player broadcasts $\bot$.
    	    This means that for each honest player $i$, $\#_i^1(\bot,c)\ge\lfloor\frac{2n}{3}\rfloor+1$ thus there cannot exist a value $v_c\ne \bot$ such that $\#_i^1(v_c,c)\ge \lfloor\frac{2n}{3}\rfloor+1$ (otherwise the number of messages considered by the honest player $i$ would exceed $n$).
    	    Hence $i$ will send the message with $\bot$ in the $c$-th component at the end of STEP 2.
    	    Again, for each honest player $i$, $\#_i^2(\bot,c)\ge\lfloor\frac{2n}{3}\rfloor+1$ thus there cannot exist a value $v_c\ne \bot$ such that $\#_i^2(v_c,c)\ge \lfloor\frac{n}{3}\rfloor+1$ and this implies $(v_{i,c},g_{i,c})=(\bot,0)$.
	    \end{enumerate}
    \end{enumerate}
\end{proof}

\begin{remark}
Note that, if for some honest player $i$ we have that $g_{i,c}=2$, then, by Property~1 of \Cref{mGradedConsensus}, for each honest player $j$ we have that $g_{j,c}\ge 1$.
Therefore, by Property~2 of \Cref{mGradedConsensus}, since  $g_{k,c}\ge 1$ for each honest player $k$, we have that $v_{k,c}=v_c \ne \bot$.
\end{remark}

\section{Multidimensional Byzantine Agreement}
\label{MBA section}
We now combine the MGC and MBBA protocols to create a Multidimensional Byzantine agreement protocol MBA that allows the players in a synchronous network to reach agreement on an arbitrary vector of values.

\begin{protocol}{MBA}
\vspace{5pt}
The initial value of each player $i$ is a vector $v_i' \in (V\cup\{\bot\})^m$.
    \begin{itemize}
        \item STEPS 1 and 2. Each player $i$ executes the first two steps of the $m$-dimensional GC so as to compute a value-grade pair vector $\mathbf{(v_i,g_i)}$ (which will be referred to as the output of STEP 2).  
        
        \item STEP 3. Each player $i$ executes $m$-dimensional BBA with initial vector $\mathbf{b}_i$ where for all $c \in \{1,\dots,m\}$ $b_{i,c}=0$ if $g_{i,c}=2$, $b_{i,c}=1$ otherwise. (The bit vector obtained will be referred to as the output of STEP 3).
        
        \item OUTPUT DETERMINATION. Each player $i$ outputs $\mathbf{o_i}$ where $o_{i,c}=v_i(\ne\bot)$ if $b_{i,c}=0$ and $v_{i,c}=\bot$ if $b_{i,c}=1$.
    \end{itemize}
\end{protocol}

Let us now prove that the MBA protocol is indeed a Byzantine Agreement protocol, starting with a useful lemma.

\begin{lem}
\label{lemAg}
If, for some component $c$ of the output $\mathbf{b}$ of STEP 3, the honest players get $b_c=0$, then they reach $c$-agreement on a value $v\ne\bot$ in the $c$-th component of the output of MBA.
\end{lem}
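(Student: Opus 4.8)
The plan is to chain together the two building blocks: first use the Byzantine agreement guarantees of MBBA (\Cref{mBBA}) to read off what the agreed output bit $b_c=0$ tells us about the honest players' inputs to STEP 3, and then translate that information back through the MGC guarantees to obtain agreement on a genuine value $v\ne\bot$. First I would record that, since the bit vector $\mathbf{b}$ is the output of STEP 3, which runs MBBA, and MBBA is an $(n,t)$-Byzantine Agreement protocol, the Agreement property gives $b_{i,c}=b_c=0$ for every honest player $i$. The OUTPUT DETERMINATION of MBA then makes each honest player output $o_{i,c}=v_{i,c}$, so it only remains to show that all honest players share a common value $v_{i,c}=v\ne\bot$.

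The crucial step is to extract, from the fact that the agreed bit is $0$, the existence of an honest player who entered STEP 3 with grade $2$ in component $c$. For this I would establish a \emph{per-component} form of the consistency of MBBA: if every honest player started STEP 3 with $b_{i,c}=1$, then $c$-agreement holds on $1$ at the start of the MBBA execution, and by \Cref{lem1} it persists until every player halts, forcing $b_c=1$. Taking the contrapositive, the agreed value $b_c=0$ implies that at least one honest player $i^*$ entered STEP 3 with $b_{i^*,c}=0$; by the definition of the initial bit vector in STEP 3 of MBA, this means $g_{i^*,c}=2$.

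With one honest player having grade $2$ in component $c$, I would invoke the MGC guarantees together with the remark following \Cref{mGC_thm}: Property~1 of \Cref{mGradedConsensus} forces every honest player $j$ to have $g_{j,c}\ge 1$, and then Property~2 forces $v_{j,c}=v$ for a common $v\ne\bot$. Combining this with the earlier observation that every honest player outputs $o_{i,c}=v_{i,c}$, I conclude that all honest players output the same $o_{i,c}=v\ne\bot$, which is exactly $c$-agreement on a value $v\ne\bot$ in the $c$-th component of the MBA output.

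The main obstacle is the per-component consistency argument for MBBA: \Cref{mBBA} states consistency for the whole vector, whereas here I need it componentwise. I expect this to be unproblematic once spelled out, since the parallel structure of MBBA treats each coordinate independently and the componentwise invariants are already supplied by \Cref{lem2} and, crucially, by the persistence of $c$-agreement in \Cref{lem1}; the only care needed is to phrase the start-of-protocol $c$-agreement hypothesis correctly and to ensure that ``$b_{i,c}=0\iff g_{i,c}=2$'' is used exactly as dictated by STEP 3 of MBA.
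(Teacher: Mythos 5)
Your proof is correct and follows essentially the same route as the paper's: MBBA agreement yields a common $b_c=0$, consistency applied componentwise yields an honest player who entered STEP 3 with $g_{i,c}=2$, and Properties 1 and 2 of \Cref{mGradedConsensus} then give a common value $v\ne\bot$ that every honest player outputs under the OUTPUT DETERMINATION rule. If anything, your explicit per-component consistency step via the persistence of $c$-agreement (\Cref{lem1}) is more careful than the paper, which at that point invokes ``the consistency property of Byzantine Agreement protocols'' informally even though \Cref{mBBA} states consistency only for the whole vector.
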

\begin{proof}
At the end of STEP 3, before determining the output of the protocol MBA,
if each honest player $j$ gets $b_c=0$ ($\mathbf{b}$ is common to all honest players since $m$-dimensional BBA is a Byzantine Agreement protocol),
then it means that at the beginning of STEP 3 at least one honest player $i$ had  $b_{i,c}=0$ (otherwise by the consistency property of Byzantine Agreement protocols they would output 1).
This means that at the end of STEP 2 $g_{i,c}=2$.

By Properties 1 and 2 of \Cref{mGradedConsensus}, for any other honest player $j$ we respectively get $g_{j,c}\ge 1$ and $v_{j,c}=v_{i,c}$.
Then when $j$ computes the $c$ component of output of STEP 3, since $b_c=0$ we have that $j$ sets $o_{j,c}=v_{i,c}$, as all the other honest players will do.
Thus $c$-agreement is reached. 
\end{proof}

\begin{thm}
  Whenever $n\ge3t+1$ MBA is an $(n,t)$-Byzantine Agreement protocol with soundness 1.  
\end{thm}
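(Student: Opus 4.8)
The plan is to verify the three defining properties of an $(n,t)$-Byzantine Agreement protocol with soundness $1$: halting with probability $1$, agreement, and consistency. The key observation is that MBA is a pipeline: STEPS 1--2 run the MGC protocol (whose guarantees are \Cref{mGC_thm}), and STEP 3 feeds the resulting grades into the $m$-dimensional BBA protocol (governed by \Cref{mBBA}). So I would argue each property by tracing how the MGC output conditions propagate through the BBA and into the output determination rule, component by component, since everything is parallelized across components.

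\textbf{Halting and agreement.}
First I would dispatch halting: since the initial bit vector $\mathbf{b}_i$ for the BBA instance is well defined for every honest player (it is determined by the grades $g_{i,c}$), and by \Cref{mBBA} the $m$-dimensional BBA halts with probability $1$, the honest players complete STEP 3 with probability $1$ and then deterministically compute their outputs. For agreement, \Cref{mBBA} guarantees that the BBA output $\mathbf{b}$ is common to all honest players, so $b_c$ is the same bit for every honest player in each component $c$. I would then split into two cases per component. If $b_c=0$, \Cref{lemAg} already gives $c$-agreement on a common value $v\ne\bot$. If $b_c=1$, then by the output determination rule every honest player outputs $o_{i,c}=\bot$, so $c$-agreement holds on $\bot$. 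Taking the conjunction over all $m$ components yields $\mathbf{o}_i=\mathbf{o}_j$ for all honest $i,j$, establishing the agreement property with $o=\mathbf{o}_i$.

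\textbf{Consistency.}
This is the delicate part and where I expect the main subtlety. Suppose all honest players share the same initial vector $\mathbf{v}'$. I would argue componentwise. For a component $c$ with $v'_c\ne\bot$: I cannot directly invoke Property~3 of \Cref{mGradedConsensus}, because that property's hypothesis requires \emph{all $n$} initial values to agree, whereas here only the honest players are assumed to agree and the malicious players may have deviated. The honest supermajority must be leveraged instead: since more than $\tfrac{2}{3}n$ players broadcast $v'_c$ in STEP~1 of MGC, each honest player sees $\#_i^1(v'_c,c)\ge\lfloor\tfrac{2n}{3}\rfloor+1$ and forwards $v'_c$, forcing $g_{i,c}=2$ for every honest player; hence $b_{i,c}=0$ going into the BBA, so by the consistency property of \Cref{mBBA} the BBA output is $b_c=0$, and \Cref{lemAg} yields $o_c=v'_c$. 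For a component with $v'_c=\bot$: the honest supermajority sends $\bot$, so no value $x\ne\bot$ can reach the thresholds $\lfloor\tfrac{2n}{3}\rfloor+1$ or even $\lfloor\tfrac{n}{3}\rfloor+1$ at the honest players (a counting argument, since honest $\bot$-messages already exceed $\tfrac{2n}{3}$), giving $g_{i,c}=0$, hence $b_{i,c}=1$ for all honest players, whence the BBA outputs $b_c=1$ and the output determination rule sets $o_c=\bot=v'_c$. The crux is therefore re-deriving the relevant unanimity conclusions of MGC under the weaker ``honest-only'' hypothesis via the $>\tfrac{2}{3}n$ honest-majority counting, rather than quoting Property~3 verbatim; once this is done, the consistency property $o=\mathbf{v}'$ follows on every component.
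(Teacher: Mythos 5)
Your proposal is correct and follows the same overall pipeline as the paper's proof: halting is inherited from the halting of MGC and MBBA, and agreement is obtained from the common MBBA output vector $\mathbf{b}$ together with the per-component case split ($b_c=0$ handled by \Cref{lemAg}, $b_c=1$ forcing $\bot$), which is word-for-word the paper's argument. Where you genuinely diverge is consistency, and your version is the more careful one. The paper assumes that \emph{every} player $i$ starts with $\mathbf{v}'_i=\mathbf{v}$ and then quotes Property~3 of \Cref{mGradedConsensus} verbatim; read strictly, that property's hypothesis ($v'_{1,c}=\dots=v'_{n,c}$) constrains the malicious players' inputs as well, whereas the Consistency condition in the BA definition only assumes agreement among the \emph{honest} players. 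You correctly flag this mismatch and close it by re-running the counting from the proof of \Cref{mGC_thm} under the honest-only hypothesis: the honest players number at least $\lfloor\frac{2n}{3}\rfloor+1$, so when they share $v'_c\ne\bot$ every honest $i$ gets $\#_i^1(v'_c,c)\ge\lfloor\frac{2n}{3}\rfloor+1$, forwards $v'_c$, and ends with grade $2$ (no competing $x\ne\bot$ can reach the threshold, since $2(\lfloor\frac{2n}{3}\rfloor+1)>n$), and symmetrically when $v'_c=\bot$ all honest grades collapse to $0$; from there the honest inputs to MBBA coincide, and its consistency (\Cref{mBBA}) plus \Cref{lemAg} finish exactly as in the paper. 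In short: same decomposition, but your consistency argument proves what the theorem's statement actually requires, at the modest cost of redoing the MGC counting; the paper's version is shorter but tacitly relies either on the stronger all-players hypothesis or on the unstated observation that the proof of Property~3 only ever uses the honest players' messages, so it remains valid under the weaker hypothesis.
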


\begin{proof}
We have already proven that MBBA halts with probability 1. Since MGC is not an iterative protocol, after its 2 steps it will halt with probability 1. Therefore, MBA halts with probability 1 as well.
We must prove the Consistency and Agreement properties.
\begin{itemize}
    \item \emph{Consistency}: we assume that, for each player $i$, the initial vector is ${\mathbf{v_i'}=\mathbf{v}\in (V\cup\{\bot\})^m}$.
    By Property 3 of $m$-dimensional Graded Consensus, at the end of the second step of protocol MBA the output $(\mathbf{v}_i,\mathbf{g}_i)=((v_{i,1},g_{i,1}),\dots,(v_{i,m},g_{i,m}))$ of any honest player $i$ has, for all $c \in \{1,\dots,m\}$, $(v_{i,c},g_{i,c})=(v_{i,c}',2)$ if $v_{i,c}'\ne\bot$ and $(v_{i,c},g_{i,c})=(\bot,0)$ if $v_{i,c}'=\bot$.
    Accordingly, the honest players will agree on the initial bit vector of STEP 3 of MBA: in particular they will set $b_{i,c}=0$ if $v_{i,c}\ne\bot$, $b_{i,c}=1$ if $v_{i,c}=\bot$.
    By the Agreement property of $m$-dimensional BBA we obtain that the agreed upon bit vector of STEP 3 will be the same for all honest players and equal to the initial vector of $m$-dimensional BBA.
    Hence, by the MBA protocol definition, the output of the protocol MBA of each honest player $i$ will be $\mathbf{o}_i=\mathbf{v}$, the common initial vector.
    
    \item \emph{Agreement}: since $m$-dimensional BBA is a Byzantine Agreement protocol, all the honest players will end STEP 3 with the same bit vector $\mathbf{b}$.
    Each honest player $i$ will compute $\mathbf{o}_i$ in the following way for each component $c$ of $\mathbf{b}$:
    \begin{enumerate}
        \item either $b_c=0$ for all honest players: in this case $c$-agreement on the outputs holds thanks to \Cref{lemAg};
        \item otherwise $b_c=1$ for all honest players: in this case all players will set $out_{i,c}=\bot$, so $c$-agreement on the output still holds.
    \end{enumerate}
    Since $c$-agreement holds for every component $c$, we can state that Agreement holds.
    \end{itemize}
\end{proof}

Note that the protocol MBA is the multidimensional version of the protocol \emph{BA*} described in \cite{chen2019algorand}.

\subsection{A Probability Game}
\label{PG}
\label{game}
We consider the following game that will be used to model the evolution of the component finalization process in the MBBA protocol, and thus the MBA protocol.

In particular we want compute the probability distribution associated to the number of steps necessary to win this game.
From that, we retrieve the probability distribution associated to the number of MBBA iterations necessary to end the \emph{MBA} Protocol.

The game is the following: we have $n$ coins which flip heads with probability $\pi$, at each step we flip the coins and discard the ones which flipped heads, then we carry on with the others until there are no coins left.
So, in the first step we flip all $n$ coins, then we discard the $h_1$ coins which flipped heads, in the second step we flip the remaining $n - h_1$ coins and so on.

We now compute the probability distribution associated to the number $\chi_{n,\pi}$ of steps required to end the game.

The probability that a coin flips heads at least once in $w$ steps is $1-(1-\pi)^w$, hence, being the coin flips independent, the probability that all coins flip heads at least once 
is $(1-(1-\pi)^w)^n$.
This means that
$$  P(\chi_{n,\pi}>w)=1-(1-(1-\pi)^w)^n,$$
and from that we can compute
\begin{align*} 
P(\chi_{n,\pi}=w)&=P(\chi_{n,\pi}>w-1)-P(\chi_{n,\pi}>w)\\
&=(1-(1-\pi)^{w})^n - (1-(1-\pi)^{w-1})^n.
\end{align*}
This defines the probability distribution associated to the number of steps to end this game played with $n$ coins.

Note that in the Coin-Genuinely-Flipped step of the MBBA protocol the bits in the ambiguous components (i.e. the ones where some honest player has less than $\frac{2}{3}n$ confirmations) are randomly flipped, while the components in agreement are left untouched.
Moreover note that these flips cause the value of the ambiguous components to match the one held by the other honest players (thus reaching agreement) with probability greater than $h \cdot 2^{-l}$ (see \Cref{lem3}), and that we end the protocol when all components are in agreement, so the probability distribution $\chi_{n,\frac{h}{2}}$ gives an upper bound on the distribution of the number of Coin-Genuinely-Flipped steps necessary to end the MBBA protocol.
To connect more directly with the analysis of \Cref{lem3}, note that the game above ends in one step with probability $\pi^n$ and that $(\frac{h}{2})^l \le h \cdot 2^{-l}$, so this lower bound in the probability translates in an upper bound in the number of steps necessary to conclude the protocol.
\subsection{Number of Steps}
\label{mainT}
\begin{thm}
    The distribution of the number of MBBA iterations required to end the MBA protocol run is upper bounded by the random variable $ 1 + \chi_{l,\frac{h}{2}}$ where
    \begin{itemize}
        \item $l$ is the number of ambiguous components;
        \item $h$ is the ratio of honest nodes in the network;
        \item $\chi_{l,\frac{h}{2}}$ is the random variable described in \Cref{game}.
    \end{itemize}  
\end{thm}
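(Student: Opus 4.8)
The plan is to reduce the number of MBBA iterations to the number of Coin-Genuinely-Flipped executions (STEP~3) needed to reach full agreement, and then to dominate that number by the coin game of \Cref{game}. First I would isolate the ambiguous components: for a component on which all honest players observe the same value, those players feed identical inputs into the graded consensus and, by the argument behind Property~3 of \Cref{mGradedConsensus} (which uses only that the honest players form a supermajority), they all obtain the same grade and hence the same starting bit $b_{i,c}$; thus such a component is in $c$-agreement at the start of the $m$-dimensional BBA and, by \Cref{lem1}, never leaves it, contributing no randomness. Consequently at most the $l$ ambiguous components can be in disagreement, and I may assume the worst case in which all $l$ of them are in disagreement at the first STEP~3.

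Next I would set up a pessimistic stochastic model and identify each still-disagreeing ambiguous component with a coin. At each STEP~3 the minimizing player $p$ is honest with probability $h$, independently across iterations, since the fresh signatures on $r\|\gamma$ and the random-oracle assumption make the selections i.i.d. I would assume, in the worst case, that no component reaches agreement on a step where $p$ is malicious and that on a step where $p$ is honest each still-disagreeing component reaches agreement independently with probability exactly $\frac{1}{2}$; both assumptions only slow the real protocol, since \Cref{lem3} guarantees probability at least $\frac{1}{2}$ per component when $p$ is honest (probability $1$ in the sub-step-1/2 cases) and components may additionally clear in STEP~1 or STEP~2. Once a component clears it stays cleared by \Cref{lem1}, exactly as a coin showing heads is discarded, and the marginal probability that a fixed component is still in disagreement after $w$ executions is $(1-\frac{h}{2})^{w}$, matching a single coin of the game with $\pi=\frac{h}{2}$.

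The core step is the comparison of the two full distributions. Letting $\tau$ be the number of STEP~3 executions until full agreement and $K\sim\mathrm{Binomial}(w,h)$ the number of honest-$p$ steps among the first $w$, conditioning on $K$ gives
\begin{equation*}
\mathbb{P}(\tau\le w)=\mathbb{E}\!\left[\bigl(1-2^{-K}\bigr)^{l}\right],
\end{equation*}
because, given $k$ honest steps, each component clears independently with probability $1-2^{-k}$. For the game, independence of the $l$ coins and $\mathbb{E}[2^{-K}]=(1-\frac{h}{2})^{w}$ yield
\begin{equation*}
\mathbb{P}\bigl(\chi_{l,\frac{h}{2}}\le w\bigr)=\bigl(1-(1-\tfrac{h}{2})^{w}\bigr)^{l}=\bigl(\mathbb{E}[1-2^{-K}]\bigr)^{l}.
\end{equation*}
Since $y\mapsto y^{l}$ is convex on $[0,1]$ for $l\ge 1$, Jensen's inequality gives $\mathbb{E}[(1-2^{-K})^{l}]\ge(\mathbb{E}[1-2^{-K}])^{l}$, hence $\mathbb{P}(\tau\le w)\ge\mathbb{P}(\chi_{l,\frac{h}{2}}\le w)$ for every $w$, i.e. $\tau$ is stochastically dominated by $\chi_{l,\frac{h}{2}}$. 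Intuitively the shared honest-$p$ event positively correlates the clearing of the components, so clearing them all is easier than in the independent game; this is the distributional version of the single-step inequality $(\frac{h}{2})^{l}\le h\cdot 2^{-l}$ already noted after \Cref{game}.

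Finally I would account for the additive $1$: once full agreement is first reached at the end of some STEP~3, \Cref{lem1} keeps it in place, and in the very next iteration STEP~1 finalizes the zero components and STEP~2 the one components (setting the corresponding entries of $\mathbf{f}$ to $1$), so the EXIT CHECK triggers and every honest player halts within that single extra iteration; hence the iteration count is at most $\tau+1$, stochastically dominated by $1+\chi_{l,\frac{h}{2}}$. I expect the main obstacle to be exactly the correlation in the clearing events: a naive pathwise coupling fails because on honest-$p$ steps many components clear together, so the argument must pass through the conditional-on-$K$ computation and Jensen's inequality rather than a step-by-step comparison, and one must verify that each worst-case simplification (no clearing on malicious-$p$ steps, probability exactly $\frac{1}{2}$ on honest-$p$ steps, all $l$ components initially in disagreement) moves the bound in the stated direction.
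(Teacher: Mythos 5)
Your proposal is correct, and structurally it follows the paper's accounting: unambiguous components are finalized in the first iteration, persistence of $c$-agreement comes from \Cref{lem1}, the per-component clearing probability of at least $\tfrac{h}{2}$ per Coin-Genuinely-Flipped step comes from \Cref{lem3} applied with a single disagreeing component, and the additive $1$ covers the Coin-Fixed-To-0 and Coin-Fixed-To-1 steps of the iteration after agreement is first reached. Where you genuinely diverge is at the decisive probabilistic step. The paper simply asserts that since each ambiguous component clears with probability at least $\tfrac{h}{2}$ per STEP~3, ``the number of Coin-Genuinely-Flipped steps required to end the protocol is upper bounded by the distribution $\chi_{l,\frac{h}{2}}$'', and only gestures at the cross-component dependence via the one-step inequality $(\tfrac{h}{2})^l \le h \cdot 2^{-l}$ at the end of \Cref{PG}; it never addresses the fact that the clearing events of distinct components within one step are positively correlated through the shared event that the minimal-hash player $p$ is honest, so the protocol process is not literally the independent-coin game. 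Your conditioning on $K\sim\mathrm{Binomial}(w,h)$ together with the Jensen step $\mathbb{E}\bigl[(1-2^{-K})^{l}\bigr]\ge\bigl(\mathbb{E}[1-2^{-K}]\bigr)^{l}$ converts this into a genuine CDF-wise stochastic domination, and your observation that a naive pathwise coupling fails is exactly the gap the paper glosses over; in this sense your argument is a rigorous completion of the paper's sketch rather than a rederivation of it. Two small points to tidy up: your displayed equality $\mathbb{P}(\tau\le w)=\mathbb{E}[(1-2^{-K})^{l}]$ holds in the pessimistic model, with the real stopping time only dominated by it (you do flag this, but the notation conflates the two), and the conditional independence across components given $K$ should be explicitly grounded in the random-oracle assumption that the bits $k_c$ are independent and that all honest players compute the same $k$ when $p$ is honest — both facts are stated inside the paper's proof of \Cref{lem3} and can simply be cited.
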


\begin{proof}
We recall that by \Cref{lem1} if at some step $c$-agreement holds on some component $c$ then $c$-agreement will keep holding for the whole protocol run.
But also, if $c$-agreement holds on some bit $b\in\{0,1\}$, then in the next Coin-Fixed-To-\emph{b} step all the honest nodes will finalize the $c$-th component, in fact the honest nodes are more than $\frac{2}{3}n$ by the assumption in \Cref{Net_Ass}.

This means that, for the unambiguous components, the honest nodes are already in $c$-agreement, therefore they will finalize such components in the first Coin-Fixed-To-0 and Coin-Fixed-To-1, i.e. by the first MBBA iteration.

The consensus evolution is much more complex for the ambiguous events. In fact agreement may not be reached in the first MBBA iteration and it might occur that a Coin-Genuinely-Flipped step is triggered.
In such case, by applying \Cref{lem3} to the case in which there is a single component $c$ on which agreement is not reached (hence $l=1$), we get that with probability at least $\frac{h}{2}$ $c$-agreement will be reached.
Once again we recall that once $c$-agreement is reached it will keep holding.

Let $\{c_1,\dots,c_l\}$ be the components associated to the ambiguous events, we have that at every Coin-Genuinely-Flipped step $c_i$-agreement, for $i\in\{1,\dots,l\}$, is reached with probability at least $\frac{h}{2}$, so we can say that the number of Coin-Genuinely-Flipped steps required to end the protocol is upper bounded by the distribution $\chi_{l,\frac{h}{2}}$ described in \Cref{game}.
Since once $c$-agreement is reached two more steps might be required to finalize the component, we can say that the operations required to end the protocol is at most:
\begin{itemize}
    \item 3 steps of MGC;
    \item $\chi_{l,\frac{h}{2}}$ iterations of MBBA;
    \item 2 steps of the next iteration necessary to finalize the last components on which $c$-agreement is reached.
\end{itemize}

Therefore the distribution of the number of MBBA iterations required to end the DTSL Protocol run is upper bounded by $1 + \chi_{l,\frac{h}{2}}$.
\end{proof}

\begin{cor}[Number of Communication Steps]
The distribution of the number of communication steps required to end the MBA protocol run is upper bounded by the random variable $ 5 + 3\chi_{l,\frac{h}{2}}$.
\end{cor}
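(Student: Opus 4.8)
The plan is to obtain this corollary from the preceding theorem by converting the bound on the number of MBBA iterations into a bound on the number of communication steps. First I would invoke the theorem, which states that the number of MBBA iterations needed to terminate an MBA run is upper bounded by $1 + \chi_{l,\frac{h}{2}}$; since the MGC phase is non-iterative, this random variable is the only source of variability in the count.

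Next I would tally the communication steps contributed by each phase. Every iteration of the MBBA protocol consists of exactly three broadcast steps, namely the Coin-Fixed-To-0 step, the Coin-Fixed-To-1 step and the Coin-Genuinely-Flipped step, each of which opens with a single broadcast of $\mathbf{b}_i$ (in the Coin-Genuinely-Flipped step the signature $s_i$ travels on the same broadcast, so it is still one step). The EXIT CHECK and the local updates of $\mathbf{b}_i$ and $\mathbf{f}_i$ require no extra step, since the finalizing starred message is issued within the broadcast of the step that detects the exit condition. Hence MBBA contributes at most $3(1 + \chi_{l,\frac{h}{2}})$ communication steps.

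Then I would account for the graded consensus prefix. In the MBA protocol each player runs only the first two steps of the $m$-dimensional GC, each a single broadcast, while the OUTPUT DETERMINATION of MGC is a purely local computation and costs no step; this gives a fixed contribution of $2$ steps. Adding the two contributions yields the upper bound
\[
2 + 3\left(1 + \chi_{l,\frac{h}{2}}\right) = 5 + 3\,\chi_{l,\frac{h}{2}},
\]
which is exactly the claimed random variable.

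The part I expect to require the most care, rather than genuine difficulty, is the attribution of communication steps: I would verify that neither the local OUTPUT DETERMINATION of MGC nor the EXIT CHECK and starred-message mechanism of MBBA introduce broadcasts beyond those already counted, and that the two non-flipping steps finalizing the components after $c$-agreement is reached are precisely the ones absorbed into the ``$+1$'' iteration of the theorem's bound. Once this bookkeeping is pinned down, the corollary follows by a direct substitution of the iteration bound into the per-iteration step count.
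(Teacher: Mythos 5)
Your proposal is correct and matches the paper's (implicit) argument: the corollary follows by the same direct substitution of the iteration bound $1+\chi_{l,\frac{h}{2}}$ into a per-step tally, exactly as in the count given at the end of the theorem's proof. The only cosmetic difference is how the constant $5$ is decomposed --- the paper charges $3$ steps to the MGC phase and $2$ to the final finalizing Coin-Fixed steps, whereas you charge $2$ communication steps to MGC (correctly noting that OUTPUT DETERMINATION is local) and absorb the finalization into a full extra $3$-step MBBA iteration; both tallies yield $5+3\chi_{l,\frac{h}{2}}$.
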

\section{Conclusions}\label{conclusion}
We presented the MBA protocol, a Byzantine agreement protocol for synchronous and complete networks which allows the nodes to reach consensus on a vector of arbitrary values, working in parallel on each component.

The protocol we have designed is based on the extension to the multidimensional case of the protocols presented by Micali in \cite{feldman1997optimal,micali2016byzantine}, and we have presented in the analysis a probabilistic upper bound to the number of steps to be executed before the protocol halts.

We believe that the MBA Protocol would find many applications in decentralized environments, in particular in contexts in which it is required coordination between various entities that simultaneously modify the state of a decentralized system.
In particular its leaderless approach allows to widen the agreement by taking account of multiple points of view, resulting in a more democratic approach that is valued in permissionless distributed settings and also thwarts attacks from malicious leaders, which other approaches can only mitigate.

The parallel approach that tackles all components at once enhances efficiency in comparison to multiple executions of the monodimensional protocol we generalized.
In our description the identification of the messages' senders is implicit, so the advantage of the multidimensional protocol is clear only in the Coin-Genuinely-Flipped step, where just the one signature is enough to derive all the coin flips.
However, in practical settings messages are authenticated through digital signatures, and with our protocol the parallel messages are neatly organized into one, which then requires just one communication session and digital signature, further enhancing the efficiency.

\subsection{Future Works}
The network assumptions we have used allow to describe the protocol and prove its properties in a clean and intuitive way, however they are quite unrealistic in practical applications.
So the MBA protocol should be extended for usage in asynchronous and incomplete networks, that model more closely real-life communication channels.
This would obviously allow to apply the protocol to a variety of practical problems, such as blockchain platforms implementing sharding.
In fact, the MBA Protocol, if designed for asynchronous networks, would allow the nodes working on different shards to synchronize their operations creating an extremely regulated environment, which gives the right conditions for a practical reconciliation of the transactions recorded on the shards.

Another research direction could focus on extending the protocol by introducing some \emph{termination steps}, which allow to conclude the protocol execution in a predetermined number of steps if it does not halt by a certain limit.
In fact, many concrete applications benefit from an upper bound on the protocol execution time that this extension would give.
%However, it is quite tricky to reconcile an execution bound with the goal to preserve as much meaningful agreement as possible: the trivial solution is to collapse the agreement on $\bot$ if consensus is not reached in time, but avoiding to do so exposes to a variety of issues, especially in the setting of asynchronous incomplete networks where malicious players have a widened array of attacks at their disposal.
However, it is quite tricky to reconcile an execution bound with the goal to preserve as much meaningful agreement as possible: the trivial solution is to collapse the still-ambiguous components to $\bot$ so that consensus is reached in a bounded number of steps.
More advanced termination steps would be preferable, however non-trivial solutions may cause a variety of issues, especially in the setting of asynchronous incomplete networks where malicious players have a widened array of attacks at their disposal.

\paragraph{Acknowledgments}
The core of this work is contained in the first author's MSC thesis.%, that would like to thank
%his two supervisors, the second and fourth author, and 
%Telsy S.p.A. for their support during the work.
Part of the results presented here have been carried on within the EU-ESF activities, call PON Ricerca e Innovazione 2014-2020, project Distributed Ledgers for Secure Open Communities.
The second and third authors are members of the INdAM Research group GNSAGA.
We would like to thank the Quadrans Foundation for their support.
\\
%The authors would like to thank the anonymous referees.

\renewcommand{\bibname}{References} 
\bibliographystyle{plain}
\bibliography{Bibliography}

\begin{thebibliography}{10}

\bibitem{buterin2017casper}
Vitalik Buterin and Virgil Griffith.
\newblock Casper the friendly finality gadget.
\newblock {\em arXiv preprint arXiv:1710.09437}, 2017.

\bibitem{castro1999practical}
Miguel Castro, Barbara Liskov, et~al.
\newblock Practical byzantine fault tolerance.
\newblock In {\em OSDI}, volume~99, pages 173--186, 1999.

\bibitem{chen2016algorand}
Jing Chen and Silvio Micali.
\newblock Algorand.
\newblock {\em arXiv preprint arXiv:1607.01341}, 2016.

\bibitem{chen2019algorand}
Jing Chen and Silvio Micali.
\newblock Algorand: A secure and efficient distributed ledger.
\newblock {\em Theoretical Computer Science}, 777:155--183, 2019.

\bibitem{EOS2017}
D.~Larimer et~al.
\newblock Eos.io technical white paper v2.
\newblock
  \url{https://github.com/EOSIO/Documentation/blob/master/TechnicalWhitePaper.md},
  2017.

\bibitem{feldman1997optimal}
Pesech Feldman and Silvio Micali.
\newblock An optimal probabilistic protocol for synchronous byzantine
  agreement.
\newblock {\em SIAM Journal on Computing}, 26(4):873--933, 1997.

\bibitem{fischer1983consensus}
Michael~J Fischer.
\newblock The consensus problem in unreliable distributed systems (a brief
  survey).
\newblock In {\em International conference on fundamentals of computation
  theory}, pages 127--140. Springer, 1983.

\bibitem{kiayias2017ouroboros}
Aggelos Kiayias, Alexander Russell, Bernardo David, and Roman Oliynykov.
\newblock Ouroboros: A provably secure proof-of-stake blockchain protocol.
\newblock In {\em Annual International Cryptology Conference}, pages 357--388.
  Springer, 2017.

\bibitem{lamport2019byzantine}
Leslie Lamport, Robert Shostak, and Marshall Pease.
\newblock The byzantine generals problem.
\newblock In {\em Concurrency: the Works of Leslie Lamport}, pages 203--226.
  2019.

\bibitem{longo2020analysis}
Riccardo Longo, Alessandro~Sebastian Podda, and Roberto Saia.
\newblock Analysis of a consensus protocol for extending consistent subchains
  on the bitcoin blockchain.
\newblock {\em Computation}, 8(3):67, 2020.

\bibitem{meneghetti2020survey}
Alessio Meneghetti, Massimiliano Sala, and Daniele Taufer.
\newblock A survey on pow-based consensus.
\newblock {\em Annals of Emerging Technologies in Computing (AETiC)}, 4(1),
  2020.

\bibitem{micali2016byzantine}
Silvio Micali.
\newblock Byzantine agreement, made trivial, 2016.

\bibitem{PoS}
Cong~T. Nguyen, Dinh~Thai Hoang, Diep~N. Nguyen, Dusit Niyato, Huynh~Tuong
  Nguyen, and Eryk Dutkiewicz.
\newblock Proof-of-stake consensus mechanisms for future blockchain networks:
  Fundamentals, applications and opportunities.
\newblock {\em IEEE Access}, 7:85727--85745, 2019.

\bibitem{rocket2018snowflake}
Team Rocket.
\newblock Snowflake to avalanche: A novel metastable consensus protocol family
  for cryptocurrencies.
\newblock {\em Available [online].[Accessed: 4-12-2018]}, 2018.

\bibitem{team2017zilliqa}
Zilliqa Team et~al.
\newblock The zilliqa technical whitepaper.
\newblock {\em Retrieved Sept}, 16:2019, 2017.

\bibitem{yin2018hotstuff}
Maofan Yin, Dahlia Malkhi, Michael~K Reiter, Guy~Golan Gueta, and Ittai
  Abraham.
\newblock Hotstuff: Bft consensus in the lens of blockchain.
\newblock {\em arXiv preprint arXiv:1803.05069}, 2018.

\end{thebibliography}
\clearpage
\end{document}